\newcounter{problemcnt}
\newtheorem{problem}[problemcnt]{Problem}
\newtheorem{theorem}{Theorem}[section]
\newtheorem{corollary}[theorem]{Corollary}
\newtheorem{proposition}[theorem]{Proposition}
\newtheorem{Definition}[theorem]{Definition}
\newtheorem{example}[theorem]{Example}
\newtheorem{conjecture}[theorem]{Conjecture}
\renewcommand{\vec}[1]{\mathbf{#1}}
\newcommand{\tuple}[1]{\ensuremath{\langle #1 \rangle}}
\newcommand{\C}{\ensuremath{\mathcal{C}}}
\newcommand{\cone}[1]{\ensuremath{\operatorname{Cone}(#1)}}
\newcommand{\express}[1]{\ensuremath{\tuple{#1}}}
\newcommand{\cspo}{\textsc{CSP}}
\newcommand{\vcspo}{\textsc{VCSP}}
\newcommand{\sfm}{\textsc{SFM}}
\newcommand{\vcsp}[1]{\textsc{VCSP}(\ensuremath{#1})}
\newcommand{\RRo}{\overline{\mathbb{R}}}
\newcommand{\mul}[1]{\ensuremath{\mathsf{Mul}(#1)}}
\begin{document}

\title{The Expressive Power of Binary Submodular Functions\thanks{An earlier
version of some parts of the results of this article appeared in {\em
Proceedings of the 14th International Conference on Principles and Practise of
Constraint Programming (CP)}, 2008, pp. 112--127, and in Oxford University
Computing Laboratory Technical Report CS-RR-08-08, June 2008.}}

\author{Stanislav \v{Z}ivn\'y \\
Oxford University \\
Computing Laboratory \\
\texttt{stanislav.zivny@comlab.ox.ac.uk}
\and
David A. Cohen\\
Royal Holloway, University of London\\
Department of Computer Science\\
\texttt{dave@cs.rhul.ac.uk}\\
\and
Peter G. Jeavons \\
Oxford University \\
Computing Laboratory \\
\texttt{peter.jeavons@comlab.ox.ac.uk}}
\date{}
\maketitle

\begin{abstract}

It has previously been an open problem whether all Boolean submodular functions
can be decomposed into a sum of binary submodular functions over a possibly
larger set of variables. This problem has been considered within several
different contexts in computer science, including computer vision, artificial
intelligence, and pseudo-Boolean optimisation. Using a connection between the
expressive power of valued constraints and certain algebraic properties of
functions, we answer this question negatively.

Our results have several corollaries. First, we characterise precisely which
submodular functions of arity 4 can be expressed by binary submodular functions.
Next, we identify a novel class of submodular functions of arbitrary arities
which can be expressed by binary submodular functions, and therefore minimised
efficiently using a so-called expressibility reduction to the {\sc Min-Cut}
problem. More importantly, our results imply limitations on this kind of
reduction and establish for the first time that it cannot be used in general to
minimise arbitrary submodular functions. Finally, we refute a conjecture of
Promislow and Young on the structure of the extreme rays of the cone of Boolean
submodular functions.

\end{abstract}

\noindent{\bf Keywords:}
Combinatorial optimisation, decomposition of submodular functions, expressive
power, Gibbs energy minimisation, Markov Random Fields, min cut, multimorphisms,
submodular pseudo-Boolean minimisation, submodular polynomials, valued
constraint satisfaction problems.

\section{Introduction}

\subsection{Background}
\label{sec:background}

A function $f:2^V\rightarrow\mathbb{R}$ is called {\em submodular\/} if for all $S,T\subseteq V$,
\[
f(S\cap T)+f(S\cup T)\ \leq\ f(S)+f(T).
\]
Submodular functions are a key concept in operational research and combinatorial
optimisation~\cite{Nemhauser88:optimization,Narayanan97:submodular,Topkis98:book,Schrijver03:CombOpt,Fujishige05:submodular2nd,Korte07:combopt,Iwata08:sfm-survey}.
Examples include cut capacity functions, matroid rank functions, and entropy
functions. Submodular functions are often considered as a discrete analogue of
convex functions~\cite{Lovasz83:submodular}.

Both minimising and maximising submodular functions, possibly under some
additional conditions, have been considered extensively in the literature.
Submodular function maximisation is easily shown to be
NP-hard~\cite{Schrijver03:CombOpt} since it generalises many standard NP-hard
problems such as the maximum cut problem. In contrast, the problem of {\em
minimising\/} a submodular function ($\sfm$) can be solved efficiently with only
polynomially many oracle calls, either by using the ellipsoid
algorithm~\cite{Grotschel81:ellipsoid,Grotschel88:geometric}, or by using one of
several combinatorial algorithms that have been obtained in the last
decade~\cite{Schrijver00:submodular,Iwata01:Combinatorial,Iwata02:submodular,Iwata03:scaling,Orlin07:faster,Iwata09:simple}.
The time complexity of the fastest known general algorithm for SFM is
$O(n^6+n^5L)$, where $n$ is the number of variables and $L$ is the time required
to evaluate the function~\cite{Orlin07:faster}.

The minimisation of submodular functions on sets is equivalent to the
minimisation of submodular functions on distributive lattices~\cite{Schrijver03:CombOpt}.
Krokhin and Larose have also studied the more general problem of minimising submodular functions on
non-distributive lattices~\cite{KrokhinL08:max}.

An important and well-studied sub-problem of $\sfm$ is the minimisation of
submodular functions of bounded arity ($\mbox{\sc SFM}_b$), also known as {\em
locally defined\/} submodular functions. In this scenario the submodular
function to be minimised is defined as the sum of a collection of functions
which each depend only on a bounded number of variables. Locally defined
optimisation problems occur in a variety of contexts: \begin{itemize} \item In
the context of pseudo-Boolean optimisation, such problems involve the
minimisation of Boolean polynomials of bounded {\em
degree}~\cite{Boros:pseudo-boolean}. \item In the context of artificial
intelligence, they have been studied as {\em valued constraint satisfaction
problems} ($\vcspo$)~\cite{Rossi06:handbook}, also known as {\em soft\/} or {\em
weighted\/} constraint satisfaction problems. \item In the context of computer
vision, such problems are often formulated as {\em Gibbs energy minimisation\/}
problems~\cite{Geman84} or {\em Markov Random
Fields}~\cite{Lauritzen96,Wainwright}. \end{itemize}
We will present our results primarily in the language of pseudo-Boolean
optimisation. Hence an instance of $\mbox{\sc SFM}_b$ with $n$ variables will be
represented as a polynomial in $n$ Boolean variables, of some fixed bounded
degree. However, we will also mention the consequences of our results for
constraint satisfaction problems and certain optimisation problems arising in
computer vision.

A general algorithm for $\sfm$ can always be used for the more restricted
$\mbox{\sc SFM}_b$, but the special features of this more restricted problem
sometimes allow more efficient special-purpose algorithms to be used. (Note that
we are focusing on {\em exact\/} algorithms which find an optimum solution.) In
particular, it has been shown that certain cases can be solved much more
efficiently by reducing to the {\sc Min-Cut} problem, that is, the problem of
finding a minimum cut in a directed graph which includes a given source vertex
and excludes a given target vertex. For example, it has been known since 1965
that the minimisation of {\em quadratic\/} submodular polynomials is equivalent to
finding a minimum cut in a corresponding directed
graph~\cite{Hammer65:bin,Boros:pseudo-boolean}. Hence quadratic submodular
polynomials can be minimised in $O(n^3)$ time, where $n$ is the number of
variables.

A similar approach, using a reduction to {\sc Min-Cut},
can be used for any class of polynomials which can be decomposed into a
{\em sum\/} of quadratic submodular polynomials, perhaps with additional variables to be minimised over.
We will say that a polynomial that can be decomposed in this way is {\em
expressible\/} by
quadratic submodular polynomials (see Section~\ref{sec:background}).
The following classes of functions have all been shown to be expressible in this way,
over the past four decades:
\begin{itemize}
\item
polynomials where all terms of degree 2 or more have negative coefficients (also known as negative-positive polynomials)~\cite{Rhys70:selection};
\item
cubic submodular polynomials~\cite{Bill85:Maximizing};
\item
$\{0,1\}$-valued submodular functions (also known as 2-monotone functions)~\cite{Creignouetal:siam01,Cohen05:supermodular};
\item
binary submodular functions over non-Boolean domains~\cite{Cohen04:soft}
(also known as Monge matrices~\cite{Burkard96:Monge});
\item
generalised 2-monotone functions~\cite{Cohen05:supermodular};
\item
a class recently found by~\v{Z}ivn\'{y} and Jeavons~\cite{zj08:cp} and independently by Zalesky~\cite{Zalesky08:submodular}.
\end{itemize}
All these classes of functions have
been shown to be expressible by quadratic submodular polynomials
and hence can be minimised in cubic time.

This series of positive expressibility results naturally raises the following question:
%
\begin{problem} \label{probl}
Are {\em all\/} submodular polynomials expressible by quadratic submodular polynomials?
\end{problem}

Each of the above expressibility results was obtained by an ad-hoc construction,
and no general technique has previously been
proposed which is sufficiently powerful to address Problem~\ref{probl}.

\subsection{Contributions}

Cohen et al. recently
developed a novel algebraic approach to characterising the expressive power of
valued constraints in terms of certain algebraic properties of those constraints~\cite{Cohen06:expressive}.

Using this systematic algebraic approach we are able to give a negative answer to
Problem~\ref{probl}: we show that there are quartic submodular polynomials which
are not expressible by quadratic submodular polynomials. More precisely, we
characterise exactly which quartic submodular polynomials are expressible by
quadratic submodular polynomials and which are not.
In addition, we show that any quartic submodular polynomial
is either expressible by quadratic submodular polynomials with only linearly
many extra variables, or it is not expressible at all.

On the way to establishing this result we show that two broad families of submodular functions
known as {\em upper fans\/} and {\em lower fans\/} are all
expressible by binary submodular functions.
This provides a new class of submodular polynomials of all arities
which are expressible by quadratic submodular polynomials and hence solvable efficiently
by reduction to {\sc Min-Cut}. We use the expressibility of this family,
and the existence of non-expressible functions, to refute a conjecture
from~\cite{Promislow05:supermodular} on the structure of the extreme rays of the
cone of Boolean submodular functions, and suggest a more refined conjecture of our own.

\subsection{Applications}

The concept of submodularity is important in a wide variety of fields
within computer science; in this paper we briefly discuss two of these:
artificial intelligence and computer vision.
Our results can be directly applied to both of these areas,
as we show in Section~\ref{sec:applications} below.

\paragraph{Artificial Intelligence}

A major area of investigation in artificial intelligence is
the {\em Constraint Satisfaction\/} problem ($\cspo$)~\cite{Rossi06:handbook}.
A number of extensions have been added to the basic $\cspo$ framework
to deal with questions of optimisation, including
semi-ring $\cspo$s, valued $\cspo$s, soft $\cspo$s and weighted $\cspo$s.
These extended frameworks can be used to model a wide range of
discrete optimisation
problems~\cite{Schiex95:valued,Bistarelli99:semiring+VCSP,Rossi06:handbook},
including standard problems such as {\sc Min-Cut}, {\sc Max-Sat},
{\sc Max-Ones Sat}, {\sc Max-CSP}~\cite{Creignouetal:siam01,Cohen06:complexitysoft},
and {\sc Min-Cost Homomorphism}~\cite{Gutin06:repair}.

The differences between the various frameworks are not relevant for our purposes,
so we will simply focus on one very general framework, the valued constraint satisfaction problem or $\vcspo$.
Informally, in the $\vcspo$ framework, an instance consists of a set of
variables, a set of possible values for those variables, and a set of constraints. Each constraint
has an associated cost function which assigns a cost (or degree of
violation) to every possible tuple of values for the variables in the scope of
the constraint. The goal is to find an assignment of values to all of the
variables which has the minimum total cost.

The class of constraints with submodular cost functions is the only non-trivial
tractable class of optimisation problems in the dichotomy classification of the
Boolean $\vcspo$~\cite{Cohen06:complexitysoft}, and the only tractable class in
the dichotomy classification of the {\sc Max-CSP} problem for both 3-element
sets~\cite{JonssonKK06:3valued} and arbitrary finite sets allowing constant
(that is, fixed-value) constraints~\cite{DeinekoJKK08}.

Cohen et al. showed that $\vcspo$ instances with submodular constraints over an
arbitrary finite domain can be reduced to $\sfm$~\cite{Cohen06:complexitysoft},
and hence can be solved in polynomial time. This tractability result has since
been generalised to a wider class of valued constraints over arbitrary finite
domains known as tournament-pair constraints~\cite{Cohen08:Generalising}. An
alternative approach to solving $\vcspo$ instances with bounded-arity submodular
constraints, based on linear programming, can be found
in~\cite{Cooper08:minimizing}.

\paragraph{Computer Vision}

Gibbs energy minimisation and Markov Random Fields,
play an important role in computer vision as they are applicable to a
wide variety of vision problems, including image restoration, stereo vision and
motion tracking, image synthesis, image segmentation, multi-camera scene
reconstruction and medical imaging~\cite{KolmogorovZ04}.
Reducing energy minimisation to the {\sc Min-Cut} problem has recently become
a very popular approach, leading to the rediscovery of
the property of submodularity~\cite{KolmogorovZ04,Freedman05:sub},
and showing that certain special classes of functions can be
minimised using graph cuts by introducing extra variables~\cite{Ramalingam08:exact,Hokli08:graph}.

Our results below characterise precisely which 4-ary submodular functions can be
minimised using graph cuts in this way and which cannot.
We also provide a new class of submodular functions of
arbitrary arity which can be minimised efficiently in this way.

\section{Preliminaries}
\label{sec:premlims}
In this section, we introduce the basic definitions and the main tools used
throughout the paper.
\subsection{Cost functions and expressibility}
We denote by $\RRo$ the set of all real numbers together with (positive) infinity.
For any fixed set $D$, a function $\phi$ from $D^n$ to $\RRo$ will be called a
{\em cost function\/} on $D$ of arity $n$.
If the range of $\phi$ lies entirely within $\mathbb{R}$,
then $\phi$ is called a {\em finite-valued\/} cost function.
If the range of $\phi$ is $\{0,\infty\}$, then $\phi$ can be viewed as a predicate,
or {\em relation}, allowing just those tuples $t \in D^n$ for which $\phi(t) = 0$.

Cost functions can be added and multiplied by arbitrary real values,
hence for any given set of cost functions, $\Gamma$, we can define
the convex cone generated by $\Gamma$, as follows.
\begin{Definition}
For any set of cost functions $\Gamma$, the {\em cone generated by $\Gamma$},
denoted $\cone{\Gamma}$, is defined by:
\[
\cone{\Gamma} =
\{\alpha_1 \phi_1 + \cdots \alpha_r \phi_r \mid r \ge 1;\ \phi_1,\ldots,\phi_r \in \Gamma;\ \alpha_1,\ldots\alpha_r \ge 0\}.
\]
\end{Definition}

\begin{Definition}
A cost function $\phi$ of arity $n$ is said to be {\em expressible\/} by a set of cost functions $\Gamma$
if $\phi=\min_{y_1,\ldots,y_j} \phi'(x_1,\ldots,x_n,y_1,\ldots,y_j) + \kappa$,
for some $\phi' \in \cone{\Gamma}$ and some constant $\kappa$.

The variables $y_1,\ldots,y_j$ are called {\em extra\/} (or {\em hidden}) variables, and
$\phi'$ is called a {\em gadget\/} for $\phi$ over $\Gamma$.
\end{Definition}

Note that in the special case of relations  this notion of expressibility
corresponds to the standard notion of expressibility using conjunction and
existential quantification ({\em primitive positive formulas})~\cite{Bulatov05:classifying}.

We denote by $\express{\Gamma}$ the {\em expressive power\/} of $\Gamma$, which
is the set of all cost functions expressible by $\Gamma$.

It was shown in~\cite{Cohen06:expressive} that the expressive power of a set of
cost functions is determined by certain algebraic properties of those cost
functions called fractional polymorphisms. For the results of this paper, we
will only need a certain subset of these algebraic properties, called {\em
multimorphisms}~\cite{Cohen06:complexitysoft}. These are defined in
Definition~\ref{def:multim} below, which is illustrated in
Figure~\ref{fig:multim}.

The $i$-th component of a tuple $t$ will be denoted by $t[i]$. Note that any
operation on a set $D$ can be extended to tuples over the set $D$ in a standard
way, as follows. For any function $f: D^k \rightarrow D$, and any collection of
tuples ${t_1,\ldots,t_k \in D^n}$, define ${f(t_1,\ldots,t_k) \in D^n}$ to be
the tuple ${\tuple{f(t_1[1],\ldots,t_k[1]),\ldots,f(t_1[n],\ldots,t_k[n]) }}.$

\begin{Definition}[\cite{Cohen06:complexitysoft}] \label{def:multim}

Let $\mathcal{F}:D^k\rightarrow D^k$ be the function whose $k$-tuple of output values is given
by the tuple of functions $\mathcal{F}=\tuple{f_1,\ldots,f_k}$, where each $f_i:D^k\rightarrow D$.

For any $n$-ary cost function $\phi$, we say that $\mathcal{F}$ is a $k$-ary
{\em multimorphism\/} of $\phi$ if, for all $t_1,\ldots,t_k \in D^n$,
\[ \sum_{i=1}^k \phi(t_i) \ \geq \ \sum_{i=1}^k \phi(f_i(t_1,\ldots,t_k)).
\]
\end{Definition}

\begin{figure}[tbp]
  \[
 \begin{array}{c}
  \begin{array}{c}
  t_1\\
  t_2\\
  \vdots\\
  t_k\\
  \end{array}
  \\
  \begin{array}{c}
    \
  \end{array}
  \\
  \begin{array}{c}
  t_1'=f_1(t_1,\ldots,t_k)\\
  t_2'=f_2(t_1,\ldots,t_k)\\
  \vdots\\
  t_k'=f_k(t_1,\ldots,t_k)\\
  \end{array}
  \end{array}
  \begin{array}{c}
  \begin{array}{cccccc}
    t_1[1] & t_1[2] & \ldots & t_1[n] \\
    t_2[1] & t_2[2] & \ldots & t_2[n] \\
    & & \vdots & \\
    t_k[1] & t_k[2] & \ldots & t_k[n] \\
  \end{array}
  \\
  \hline
  \begin{array}{c}
    \
  \end{array}
  \\
  \begin{array}{cccccc}
    t'_1[1] & t'_1[2] & \ldots & t'_1[n] \\
    t'_2[1] & t'_2[2] & \ldots & t'_2[n] \\
    & & \vdots & \\
    t'_k[1] & t'_k[2] & \ldots & t'_k[n] \\
  \end{array}
  \\
  \end{array}
  \begin{array}{c}
  \stackrel{\phi}{\longrightarrow}
  \left.
  \begin{array}{c}
    \phi(t_1)\\
    \phi(t_2)\\
    \vdots\\
    \phi(t_k)\\
  \end{array}
  \right\}\mbox{\normalsize{$\displaystyle \sum_{i=1}^{k}\phi(t_i)$}}
  \\
  \begin{array}{c}
    \qquad\qquad\qquad \mbox{\rotatebox{270}{$\geq$}}
  \end{array}
  \\
  \ \stackrel{\phi}{\longrightarrow}
  \left.
  \begin{array}{c}
    \phi(t'_1)\\
    \phi(t'_2)\\
    \vdots\\
    \phi(t'_k)\\
  \end{array}
  \right\}\mbox{\normalsize{$\displaystyle \sum_{i=1}^{k}\phi(t'_i)$}}
  \\
  \end{array}
  \]
  \caption{Inequality establishing $\mathcal{F} = \tuple{f_1,\ldots,f_k}$
  as a multimorphism of cost function $\phi$ (see Definition~\ref{def:multim}).}
  \label{fig:multim}
\end{figure}

For any set of cost functions, $\Gamma$, we will say that $\mathcal{F}$ is a
multimorphism of $\Gamma$ if $\mathcal{F}$ is a multimorphism of every cost
function in $\Gamma$. The set of all multimorphisms of $\Gamma$ will be denoted
$\mul{\Gamma}$.

Note that multimorphisms are preserved under expressibility. In other words, if
$\mathcal{F}\in\mul{\Gamma}$, and $\phi\in\express{\Gamma}$, then
$\mathcal{F}\in\mul{\{\phi\}}$~\cite{Cohen06:complexitysoft,Cohen06:expressive}.
This has two important corollaries. First, if
$\express{\Gamma_1}=\express{\Gamma_2}$, then $\mul{\Gamma_1}=\mul{\Gamma_2}$.
Second, if there exists $\mathcal{F}\in\mul{\Gamma}$ such that
$\mathcal{F}\not\in\mul{\{\phi\}}$, then $\phi$ is not expressible
over $\Gamma$, that is, $\phi\not\in\express{\Gamma}$.

\subsection{Lattices and submodularity}

Recall that $L$ is a {\em lattice\/} if $L$ is a partially ordered set in which every
pair of elements $(a,b)$ has a unique supremum (the least upper bound of $a$ and $b$,
called the {\em join}, denoted $ a\vee b$) and a unique infimum (the greatest lower bound,
called the {\em meet}, denoted $a\wedge b$).

For any lattice-ordered set $D$, a cost function $\phi:D^n\rightarrow\RRo$ is
called {\em submodular\/} if for every $u,v\in D^m$,
$\phi(\min(u,v))+\phi(\max(u,v))\le\phi(u)+\phi(v)$ where both $\min$ and $\max$
are applied coordinate-wise on tuples $u$ and
$v$~\cite{Nemhauser88:optimization}.
This standard definition can be reformulated very simply in terms of multimorphisms:
$\phi$ is submodular if $\tuple{\min,\max}\in\mul{\{\phi\}}$.

Using results from~\cite{Schrijver03:CombOpt} and~\cite{Cohen06:complexitysoft},
it can be shown that any submodular cost function $\phi$ can be expressed as the
sum of a finite-valued submodular cost function $\phi_{fin}$, and a submodular
relation $\phi_{rel}$, that is, $\phi=\phi_{fin}+\phi_{rel}$.

Moreover, it is known that all submodular {\em relations\/} are binary
decomposable~\cite{Jeavons98:consist}, and hence expressible using only binary
submodular relations. Therefore,  when considering which cost functions are
expressible by binary submodular cost functions, we can restrict our attention
to {\em finite-valued\/} cost functions without any loss of generality.

Next we define some particular families of submodular cost functions,
first described in~\cite{Promislow05:supermodular},
which will turn out to play a central role in our analysis.

\begin{Definition}\label{def:upperfan}
Let $L$ be a lattice. We define the following cost functions on $L$:
\begin{itemize}
\item
For any set $F$ of pairwise incomparable elements
$(a_1,\ldots,a_m) \subseteq L$, such that each pair of distinct elements $(a_i,a_j)$ has the
same least upper bound, $\bigvee F$,
the following cost function is called an {\em upper fan}:
\[
\phi_F(x) =
\begin{cases}
-2           & \mbox{if\ } x\geq \bigvee F,\\
-1           & \mbox{if $x\not\geq \bigvee F$, but $x \geq a_i$ for some $i$}, \\
\phantom{-}0 & \mbox{otherwise}.
\end{cases}
\]
\item
For any set $G$ of pairwise incomparable elements
$(a_1,\ldots,a_m) \subseteq L$, such that each pair of distinct elements $(a_i,a_j)$ has the
same greatest lower bound, $\bigwedge G$,
the following cost function is called a {\em lower fan}:
\[
\phi_G(x) =
\begin{cases}
-2           & \mbox{if\ } x\leq \bigwedge G,\\
-1           & \mbox{if $x\not\leq \bigwedge G$, but $x \leq a_i$ for some $i$}, \\
\phantom{-}0 & \mbox{otherwise}.
\end{cases}
\]
\end{itemize}
\end{Definition}

We call a cost function a {\em fan\/} if it is either an upper fan or a lower
fan. It is not hard to show that all fans are
submodular~\cite{Promislow05:supermodular}.

Note that our definition of fans is slightly more general than the
definition in~\cite{Promislow05:supermodular}. In particular, we allow the set $F$ to be empty,
in which case the corresponding upper fan $\phi_F$ is a constant function.

\subsection{Boolean cost functions and polynomials}

In this paper we will focus on problems over Boolean domains, that is, where $D=\{0,1\}$.

Any cost function of arity $n$ can be represented as a table of values of size $D^n$.
Moreover, a finite-valued cost function $\phi:D^n\rightarrow\mathbb{R}$ on a Boolean domain $D = \{0,1\}$
can also be represented as a unique {\em polynomial\/} in $n$ (Boolean) variables
with coefficients from $\mathbb{R}$ (such functions are sometimes called {\em
pseudo-Boolean functions}~\cite{Boros:pseudo-boolean}). Hence, in what follows,
we will often refer to a finite-valued cost function on a Boolean domain and its
corresponding polynomial interchangeably.

For polynomials over Boolean variables there is a standard way to define
{\em derivatives\/} of each order (see~\cite{Boros:pseudo-boolean}).
For example, the second order derivative of a polynomial $p$, with respect to the first two
indices,  denoted $\delta_{1,2}(\vec{x})$, is defined as
$p(1,1,\vec{x})-p(1,0,\vec{x})-p(0,1,\vec{x})+p(0,0,\vec{x})$.
Analogously for all other pairs of indices.
It was shown in~\cite{Fisher:sub} that a polynomial $p(x_1,\ldots,x_n)$ over Boolean
variables $x_1, \ldots, x_n$ represents a submodular cost function if and only
if its second order derivatives $\delta_{i,j}(\vec{x})$ are non-positive for all
$1\le i<j\le n$ and all $\vec{x}\in D^{n-2}$.
An immediate corollary is that a quadratic polynomial represents a submodular
cost function if and only if the coefficients of all quadratic terms are non-positive.

Note that a cost function is called {\em supermodular\/} if all its second order
derivatives are non-negative. Clearly, $f$ is submodular if and only if $-f$ is
supermodular. Cost functions which are both submodular and supermodular (in
other words, all second order derivatives are equal to zero) are called {\em
modular}, and polynomials corresponding to modular cost functions are
linear~\cite{Boros:pseudo-boolean}.

\begin{example}
\label{ex:upperpolynomial}
For any set of indices $I=\{{i_1},\ldots,{i_m}\} \subseteq \{1,\ldots,n\}$
we can define a cost function $\phi_I$ in $n$ variables as follows:
\[
\phi_I(x_1,\ldots,x_n)\ =\
\begin{cases}
-1           & \mbox{if\ } (\forall i\in I)(x_i=1), \\
\phantom{-}0 & \mbox{otherwise}.
\end{cases}
\]
The polynomial representation of $\phi_I$ is $p(x_1,\ldots,x_n)=-x_{i_1}\ldots x_{i_m}$,
which is a polynomial of degree $m$.
Note that it is straightforward to verify that $\phi_I$ is submodular
by checking the second order derivatives of $p$.

However, the function $\phi_I$ is also expressible by {\em quadratic\/} polynomials,
using a single extra variable, $y$, as follows:
\[
\phi_I(x_1,\ldots,x_n)=\min_{y\in\{0,1\}}\{-y + y \sum_{i\in I}(1-x_i)\}.
\]
We remark that this is a special case of the expressibility result for negative-positive polynomials
first obtained in~\cite{Rhys70:selection}.
\end{example}

Note that when $D= \{0,1\}$, the set $D^n$ with the product ordering is isomorphic to the
lattice of all subsets of an $n$-element set ordered by inclusion.
Hence, a cost function on a Boolean domain can be viewed as a cost function
defined on a lattice of subsets, and
we can apply Definition~\ref{def:upperfan} to identify certain Boolean functions
as upper fans or lower fans, as the following example indicates.
\begin{example}
\label{ex:upperfanpoly}
Let $F=\{I_1,\ldots,I_r\}$ be a set of subsets of $\{1,2,\dots,n\}$ such that
for all $i \neq j$ we have $I_i \not\subseteq I_j$ and $I_i \cup I_j = \bigcup F$.

By Definition~\ref{def:upperfan}, the corresponding upper fan function $\phi_F$
has the following polynomial representation:
\[
p(x_1,\ldots,x_n) = (r-2)\prod_{i\in \bigcup F} x_i - \prod_{i\in I_1} x_i - \cdots - \prod_{i\in I_r} x_i.
\]
\end{example}

We remark that any permutation of a set $D$ gives rise to an automorphism of
cost functions over $D$.
In particular, for any cost function $f$ on a Boolean domain $D$,
the {\em dual\/} of $f$ is the corresponding cost function which results from exchanging
the values 0 and 1 for all variables. In
other words, if $p$ is the polynomial representation of $f$, then the dual of
$f$ is the cost function whose polynomial representation is obtained from $p$
by replacing all variables $x$ with $1-x$. Observe that, due to symmetry,
taking the dual preserves submodularity and expressibility by binary submodular cost functions.

It is not hard to see that upper fans are duals of lower fans and vice versa.

\section{Results} \label{sec:result}

In this section, we present our main results. First, we show that fans of all
arities are expressible by binary submodular cost functions. Next, we
characterise the multimorphisms of binary submodular cost functions. Finally,
combining these results together, we characterise precisely which 4-ary
submodular cost functions are expressible by binary submodular cost functions.
More importantly, we show that some submodular cost functions are {\em not\/}
expressible by binary submodular cost functions, and therefore cannot be
minimised using the {\sc Min-Cut} problem via an expressibility reduction.
Finally, we describe some applications of these results to valued constraint
satisfaction problems and certain optimisation problems arising in computer vision.

\subsection{Expressibility of upper fans and lower fans}

We denote by $\Gamma_{\sf sub,n}$ the set of all finite-valued
submodular cost functions of arity at most $n$ on a Boolean domain $D$,
and we set $\Gamma_{\sf sub}=\bigcup_n\Gamma_{\sf sub,n}$.

We denote by $\Gamma_{\sf fans,n}$ the set of all fans of arity at most $n$ on
a Boolean domain $D$, and we
set $\Gamma_{\sf fans}=\bigcup_n\Gamma_{\sf fans,n}$.

Our next result shows that $\Gamma_{\sf fans}\subseteq\express{\Gamma_{\sf sub,2}}$.
\begin{theorem} \label{thm:upper}
Any fan on a Boolean domain $D$ is expressible by binary submodular functions on $D$
using at most $1 + \lfloor m/2\rfloor$ extra variables,
where $m$ is the degree of its polynomial representation.
\end{theorem}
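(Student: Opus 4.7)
The plan is to first reduce to upper fans using the duality observation from Section~2 of the excerpt, which preserves submodularity, expressibility, and the degree of the polynomial representation; the lower-fan case then follows automatically with the same bound on the number of extra variables. So I fix an upper fan $\phi_F$ with $F=\{a_1,\ldots,a_r\}$, set $I_j\subseteq\{1,\ldots,n\}$ to be the support of $a_j$ (so $x\geq a_j$ iff $\prod_{i\in I_j}x_i=1$), and put $m=|\bigcup F|$. By Example~\ref{ex:upperfanpoly} the polynomial representation is
\[
p(x)=(r-2)\prod_{i\in\bigcup F}x_i-\sum_{j=1}^r\prod_{i\in I_j}x_i.
\]

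The central structural observation I would exploit is that the complements $J_j:=\bigcup F\setminus I_j$ are pairwise disjoint and nonempty, since $J_i\cap J_j=\bigcup F\setminus(I_i\cup I_j)=\emptyset$ by the defining property of an upper fan. Consequently, if $x\not\geq\bigvee F$, then $x$ can dominate at most one $a_j$, namely the unique $a_k$ whose $J_k$ contains every $0$-coordinate of $x$ within $\bigcup F$. This explains why $\phi_F$ takes only three values and motivates the decomposition
\[
\phi_F(x)=-[x\geq\bigvee F]-[x\text{ dominates some }a_j].
\]
The first indicator equals $-\prod_{i\in\bigcup F}x_i$, which I would handle by the standard Rhys gadget of Example~\ref{ex:upperpolynomial}, contributing exactly one auxiliary variable $y_0$ to the construction.

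For the second indicator, which is \emph{not} submodular on its own (the upper fan is submodular only because the doubled value $-2$ at the top compensates), I would introduce $\lfloor m/2\rfloor$ further auxiliary variables by pairing up the positions of $\bigcup F$ (leaving at most one unpaired position when $m$ is odd) and attaching one auxiliary to each pair. Each such auxiliary, together with appropriately chosen non-positive pairwise coefficients with the $x_i$'s, contributes a quadratic summand; collectively these summands are tuned so that their pointwise minimum outputs $-1$ precisely when the $0$-coordinates of $x$ are contained in a single $J_k$, and $0$ otherwise. The pairwise disjointness of the $J_k$'s is exactly what allows a single pooled minimum to replace $r$ independent Rhys gadgets and is why the bound depends on $m$ rather than on $r$.

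Verification then splits into two routine parts: (i) expanding the gadget and checking that every pairwise coefficient is non-positive, which gives binary submodularity via the second-order derivative criterion stated after Example~\ref{ex:upperfanpoly}; and (ii) a case analysis on the $0$-pattern of $x$ in $\bigcup F$ (empty, contained in a single $J_k$, or spanning two or more $J_k$'s) to confirm that the minimum over the auxiliaries realises $\phi_F$ pointwise. The main obstacle, and the step that requires real work, is designing the pairing gadget so that it simultaneously reproduces $-[\text{some }a_j\text{ dominated}]$ and absorbs the positive top-degree contribution $(r-2)\prod_{i\in\bigcup F}x_i$ without introducing any supermodular binary term when combined with the Rhys gadget for $y_0$; this is the step where the disjointness of the $J_j$'s is used essentially.
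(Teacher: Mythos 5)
Your setup is sound: the duality reduction, the observation that the complements $J_j=\bigcup F\setminus I_j$ are pairwise disjoint, the decomposition $\phi_F(x)=-[x\geq\bigvee F]-[x\text{ dominates some }a_j]$, and the use of the Rhys gadget of Example~\ref{ex:upperpolynomial} for the top indicator are all correct and consistent with what the paper does. But the proof stops exactly where the theorem's content begins: the gadget for the second indicator is never constructed. You describe only the \emph{specification} it must meet (``tuned so that their pointwise minimum outputs $-1$ precisely when the $0$-coordinates of $x$ are contained in a single $J_k$'') and then explicitly defer ``the step that requires real work.'' Since the second indicator is, as you note, not itself submodular, and since the auxiliaries in your scheme are independent (so the minimum of the sum is the sum of the minima), it is not at all evident that an arbitrary pairing of the positions of $\bigcup F$ with one auxiliary per pair can realise this global, non-additive condition; no candidate coefficients are proposed, and the two ``routine'' verification steps cannot be carried out on an object that has not been defined. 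As written, this is a genuine gap rather than an omitted routine calculation.

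For comparison, the paper's construction does not pair positions arbitrarily. It declares two elements of $\bigcup F$ equivalent when they lie in exactly the same sets $I_j$ (equivalently, in the same complement $J_j$, or both in $K=\bigcup F\setminus\bigcup_j J_j$), merges each equivalence class into a single variable, and thereby reduces to a fan $F'$ in which every complement is a singleton. That reduced fan is expressed with a \emph{single} auxiliary $y$ via the explicit gadget
\[
\min_{y\in\{0,1\}}\ y\Bigl(2(m'-1)-|L|-\sum_{i\in L}z_i-2\sum_{i\in K}z_i\Bigr),
\]
whose correctness is a short computation in the counts $k_0,l_0$ of zeros in $K$ and $L$. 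Undoing the merging then costs one further Rhys-type auxiliary per equivalence class of size at least two, of which there are at most $\lfloor m/2\rfloor$; that, not a pairing of all $m$ positions, is where the bound $1+\lfloor m/2\rfloor$ comes from. To complete your argument you would either need to supply explicit coefficients for your pairing gadget and verify both submodularity and the pointwise identity, or switch to a grouping governed by the sets $J_j$ as above.
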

\begin{proof}
Since upper fans are dual to lower fans, it is sufficient to establish the
result for upper fans only.

Let $F=\{I_1,\ldots,I_r\}$ be a set of subsets of $\{1,2,\dots,n\}$ such that
for all $i \neq j$ we have $I_i \not\subseteq I_j$ and $I_i \cup I_j = \bigcup F$,
and let $\phi_F$ be the corresponding upper fan, as specified by
Definition~\ref{def:upperfan}. The polynomial representation of $\phi_F$, $p(x_1,\ldots,x_n)$,
is given in Example~\ref{ex:upperfanpoly}.

The degree of $p$ is equal to the total number of variables occurring in it, which will be denoted $m$.
Note that $m = |\bigcup F|$.

If $r=0$, then $\phi_F$ is constant, so the result holds trivially.
If $r=1$, we have $F=\{I\}$, where $I=\{i_1,\ldots,i_m\}$ and the
polynomial representation of $\phi_F$ is $-2x_{i_1}x_{i_2}\cdots x_{i_m}$.
In this case, it was shown in Example~\ref{ex:upperpolynomial} that $\phi_F$
can be expressed by quadratic functions using one extra variable, as follows:
\[
-2x_{i_1}x_{i_2}\cdots x_{i_m} = \min_{y\in\{0,1\}} \{ 2y((m-1) - \sum_{i \in I} x_{i} )\}.
\]

%
For the case when $r > 1$,
we first note that any $i\in \bigcup F$ must belong to all the elements of $F$ except for at most one
(otherwise there would be two elements of $F$, say $I_i$ and $I_j$, such that $I_i \cup I_j \neq \bigcup F$,
which contradicts the choice of $F$).

We will say that two elements of $\bigcup F$ are {\em equivalent\/} if they occur in exactly the same elements
of $F$, that is, $i_1, i_2 \in \bigcup F$ are equivalent if
$i_1 \in I_j \Leftrightarrow i_2\in I_j$ for all $j \in \{i,\ldots,r\}$.
Equivalent elements $i_1$ and $i_2$ of $\bigcup F$ can be merged by replacing them with a single new element.
In the polynomial representation of $\phi_F$ this corresponds to replacing
the variables $x_{i_1}$ and $x_{i_2}$ with a single new variable, $z$, corresponding to their product.
Note that the number of equivalence classes of size two or greater is at most $\lfloor m/2\rfloor$.

After completing all such merging, we obtain a new set
$F'=\{I'_1,\ldots,I'_{r'}\}$ with the property that $|I'_i|=m'-1$ for every $i$,
where $m'=|\bigcup F'|$ is the size of the common join of any $I'_i,I'_j\in F'$.
This set has a corresponding new upper fan, $\phi_{F'}$, over the new merged variables.

To complete the proof we will construct a simple gadget for expressing $\phi_{F'}$, and show how to
use this to obtain a gadget for expressing the original upper fan $\phi_F$.

Note that the sets $I'_i$ are subsets of $\bigcup F'$, each of size $m'-1$.
Any such subset is uniquely determined by its single missing element.
We denote by $K$ the set of elements occurring in {\em all\/} sets $I'_i$
and by $L$ the set of elements which are missing from one of these subsets.
Clearly, $|K|+|L|=m'$.
We claim that the following polynomial is a gadget for expressing $\phi_F'$:
\[
p'(z_1,\ldots,z_{m'}) = \min_{y\in\{0,1\}} \{ y (2(m'-1) - |L| - \sum_{i\in L} z_i - 2 \sum_{i\in K} z_i)\}.
\]
To establish this claim, we will compute the value of $p'$,
for each possible assignment to the variables $z_1,\ldots,z_{m'}$.
Denote by $k_0$ the number of $0$s assigned to variables in $K$,
and by $l_0$ the number of $0$s assigned to variables in $L$.
Then we have:
\begin{equation*}
\begin{split}
p'(z_1,\ldots,z_{m'}) =\
& \min_{y\in\{0,1\}} y(2m'-2-|L|-\sum_{i\in L}z_i-2\sum_{i\in K}z_i) \\
=\ & \min_{y\in\{0,1\}} y(2m'-2-|L|-(|L|-l_0)-2(m'-|L|-k_0) \\
=\ & \min_{y\in\{0,1\}} y(2m'-2-2|L|+l_0-2m'+2|L|+2k_0) \\
=\ & \min_{y\in\{0,1\}} y(-2+2k_0+l_0).\\
\end{split}
\end{equation*}
Hence if $k_0=l_0=0$, then $p'$ takes the value -2. If $k_0=0$ and $l_0=1$, then
$p'$ takes the value -1. In all other cases (that is, $k_0>0$ or $l_0>1$), $p'$
takes the value 0. By Definition~\ref{def:upperfan}, this means that $p'$ is the
(unique) polynomial representation for $\phi_{F'}$. Note that $p'$ uses just one
extra variable, $y$.

Finally, we show how to obtain a gadget for the original upper fan $\phi_F$,
from the polynomial $p'$. Each variable in $p'$ represents an equivalence class of
elements of $\bigcup F$, so it can be replaced by a
term consisting of the product of the variables in this equivalence class.
In this way we obtain a new polynomial over the original variables containing
linear and negative quadratic terms
together with negative higher order terms (cubic or above) corresponding to every
equivalence class with 2 or more elements.
However, each of these higher order terms can itself be expressed by a quadratic
submodular polynomial,
by introducing a single extra variable, as shown in the case when $r=1$, above.
Therefore, combining each of these polynomials,
the total number of new variables introduced is at most
$1+\lfloor m/2\rfloor$.
\end{proof}
Many of the earlier expressibility results mentioned in Section~\ref{sec:background}
can be obtained as simple corollaries of Theorem~\ref{thm:upper}, as the following examples indicate.
\begin{example}
Any negative monomial $-x_1x_2\cdots x_m$ is a positive multiple of an upper fan,
and the positive linear monomial $x_1$ is equal to $-(1-x_1) + 1$, so it is
a positive multiple of a lower fan, plus a constant.
Hence, by
Theorem~\ref{thm:upper}, all negative-positive submodular polynomials are expressible by
quadratic submodular polynomials, as originally shown in~\cite{Rhys70:selection}.
\end{example}
\begin{example}
Any cubic submodular polynomial can be expressed as
a positive sum of upper fans~\cite{Promislow05:supermodular}. Hence, by
Theorem~\ref{thm:upper}, all cubic submodular polynomials are expressible by
quadratic submodular polynomials, as originally shown in~\cite{Bill85:Maximizing}.
\end{example}
\begin{example}
A Boolean cost function $\phi$ is called {\em 2-monotone}~\cite{Creignouetal:siam01} if there exist two sets
$A,B\subseteq\{1,\ldots,n\}$ such that $\phi(\vec{x})=0$ if $A\subseteq\vec{x}$
or $\vec{x}\subseteq B$ and $\phi(\vec{x})=1$ otherwise (where
$A\subseteq\vec{x}$ means $\forall i\in A, x[i]=1$ and $\vec{x}\subseteq B$ means
$\forall i\not\in B, x[i]=0$).
It was shown in~\cite[Proposition 2.9]{Cohen05:supermodular} that a 2-valued Boolean cost function
is 2-monotone if and only if it is submodular.\footnote{In fact,
\cite{Cohen05:supermodular} studied {\em supermodular\/} cost functions, but
as $f$ is supermodular if and only if $-f$ is submodular, the results translate
easily.}

For any 2-monotone cost function defined by the sets of indices $A$ and $B$,
it is straightforward to check that
$\phi=\min_{y\in\{0,1\}}y(1+\phi_F/2)+(1-y)(1+\phi_G/2)$ where
$\phi_F$ is the upper fan defined by $F=\{A\}$ and
$\phi_G$ is the lower fan defined by $G=\{\overline{B}\}$.
Note that the function $y \phi_F$ is an upper fan, and the function $(1-y) \phi_G$ is a lower fan.
Hence, by
Theorem~\ref{thm:upper}, all 2-monotone polynomials are expressible by
quadratic submodular polynomials, and solvable by reduction to {\sc Min-Cut},
as originally shown in~\cite{Creignouetal:siam01}.
\end{example}
However, Theorem~\ref{thm:upper} also provides many new functions of all arities
which have not previously been
shown to be expressible by quadratic submodular functions, as the following example indicates.
\begin{example}
The function
$2x_1x_2x_3x_4-x_1x_2x_3-x_1x_2x_4-x_1x_3x_4-x_2x_3x_4$
belongs to $\Gamma_{\sf fans,4}$, but does not belong to any class of submodular functions
which has previously been shown to be expressible by quadratic submodular functions.
In particular, it does not belong to the class $\Gamma_{\sf new}$
identified in~\cite{zj08:cp,Zalesky08:submodular}.
\end{example}

\subsection{Characterising $\mul{\Gamma_{\sf sub,2}}$}

Since we have seen that a cost function can only be expressed by a given set of cost functions if it
has the same multimorphisms, we now investigate the multimorphisms of $\Gamma_{\sf sub,2}$.

A function $\mathcal{F}:D^k\rightarrow D^k$ is called {\em conservative\/} if,
for each possible choice of $x_1,\ldots,x_k$, the tuple
$\mathcal{F}(x_1,\ldots,x_k)$ contains the same multi-set of values,
$x_1,\ldots,x_k$ (in some order).

For any two tuples $\mathbf{x}=\tuple{x_1,\ldots,x_k}$ and
$\mathbf{y}=\tuple{y_1,\ldots,y_k}$ over $D$, we denote by
$H(\mathbf{x},\mathbf{y})$ the {\em Hamming distance\/} between $\mathbf{x}$ and
$\mathbf{y}$, which is the number of positions at which the corresponding
values are different.

\begin{theorem}\label{thm:hamming}
For any Boolean domain $D$, and any $\mathcal{F}:D^k \rightarrow D^k$, the following are equivalent:
\begin{enumerate}
\item $\mathcal{F}\in\mul{\Gamma_{\sf sub, 2}}$.
\item $\mathcal{F}\in\mul{\Gamma_{\sf sub, 2}^{\infty}}$, where $\Gamma_{\sf sub,2}^{\infty}$ denotes
the set of binary submodular cost functions taking finite {\em or infinite\/} values.
\item $\mathcal{F}$ is conservative and Hamming distance non-increasing.
\end{enumerate}
\end{theorem}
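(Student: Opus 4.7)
The plan is to prove the three conditions equivalent by a cyclic chain $(2) \Rightarrow (1) \Rightarrow (3) \Rightarrow (2)$. The first implication is immediate from $\Gamma_{\sf sub,2} \subseteq \Gamma_{\sf sub,2}^{\infty}$, so any multimorphism of the larger class is in particular a multimorphism of the smaller one.

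For $(1) \Rightarrow (3)$, I would first derive conservativity by applying the multimorphism inequality of Definition~\ref{def:multim} to the two unary cost functions $\phi(x) = x$ and $\phi(x) = -x$, each of which is trivially submodular. The combined inequalities give $\sum_{i=1}^k t_i = \sum_{i=1}^k f_i(t_1,\ldots,t_k)$ for every $(t_1,\ldots,t_k) \in \{0,1\}^k$, which on the Boolean domain is equivalent to conservativity of $\mathcal{F}$. To derive Hamming non-increasing, I would apply the multimorphism inequality to the two binary submodular cost functions $\phi_1(x,y) = -xy$ and $\phi_2(x,y) = -(1-x)(1-y)$ (each having non-positive quadratic coefficient). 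For any two columns $c_1, c_2 \in \{0,1\}^k$ of the input matrix, let $N_{ab}$ denote the number of rows $i$ with $(c_1[i],c_2[i]) = (a,b)$, and define $N'_{ab}$ analogously for the image columns $\mathcal{F}(c_1), \mathcal{F}(c_2)$. The two inequalities become $N_{11} \leq N'_{11}$ and $N_{00} \leq N'_{00}$; combining these with conservativity, which fixes the column 1-counts $N_{11}+N_{10}$ and $N_{11}+N_{01}$, forces $N'_{01}+N'_{10} \leq N_{01}+N_{10}$, which is precisely Hamming non-increasing.

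For $(3) \Rightarrow (2)$, fix any $\phi \in \Gamma_{\sf sub,2}^{\infty}$ and any pair of columns $c_1,c_2 \in \{0,1\}^k$. Using the same notation, conservativity yields the chain of equalities $N'_{11} - N_{11} = N'_{00} - N_{00} = N_{01} - N'_{01} = N_{10} - N'_{10} =: \epsilon$, and Hamming non-increasing gives $\epsilon \geq 0$. A direct calculation then yields
\[
\sum_{i=1}^k \phi(f_i(t_1,\ldots,t_k)) \;-\; \sum_{i=1}^k \phi(t_i) \;=\; \epsilon\bigl[\phi(0,0)+\phi(1,1)-\phi(0,1)-\phi(1,0)\bigr].
\]
When $\phi$ takes only finite values, the bracket is non-positive by submodularity and $\epsilon \geq 0$ yields the required multimorphism inequality.

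The main obstacle is interpreting this identity when some $\phi(a,b) = \infty$. My plan is to observe that submodularity forces the finite support $S \subseteq \{0,1\}^2$ of $\phi$ to be a sublattice of $\{0,1\}^2$, since any two incomparable points in $S$ whose meet or join lay outside $S$ would reduce the submodularity inequality to $\infty \leq \text{finite}$. The possible sublattices of $\{0,1\}^2$ are only the empty set, the four singletons, the five two-element sublattices, the two three-element chains $\{x \leq y\}$ and $\{y \leq x\}$, and the full lattice. When the input cost is infinite the multimorphism inequality is vacuous, so I may assume all $k$ input rows lie in $S$. For every proper sublattice $S$, a short case analysis shows that either one of the two columns is forced to be a constant by conservativity, or the Hamming distance is already at its minimum value given the column 1-counts; in either scenario $\epsilon$ must be zero, which guarantees both that the output rows again lie in $S$ and that the change in cost vanishes. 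Only when $S$ is the full lattice can $\epsilon$ be strictly positive, and then all four values of $\phi$ are finite and the computation above applies directly.
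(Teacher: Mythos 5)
Your proof is correct. The necessity direction $(1)\Rightarrow(3)$ is essentially the paper's argument in different clothing: the paper extracts conservativity from the unary indicator functions $\mu_1^d$ and the Hamming condition from the ``not-equal'' functions $\chi_c$, which encode exactly the same constraints as your $x$, $-x$, $-xy$ and $-(1-x)(1-y)$. Where you genuinely diverge is in the sufficiency direction $(3)\Rightarrow(2)$. The paper first decomposes every binary submodular cost function into generators ($\chi_c$, $\mu_c^d$, $\lambda_c$), invokes the fact that multimorphisms are preserved under expressibility, and is then left with a single crisp function $\lambda_\infty$ to verify by hand via a containment-of-supports argument. You instead treat an arbitrary $\phi$ head-on: the identity $\sum_i\phi(t_i')-\sum_i\phi(t_i)=\epsilon\,[\phi(0,0)+\phi(1,1)-\phi(0,1)-\phi(1,0)]$ with $\epsilon=(H-H')/2\ge 0$ disposes of all finite-valued functions at once, and your observation that the finite support of an infinite-valued binary submodular function is a sublattice of $\{0,1\}^2$ --- on every proper sublattice of which conservativity and Hamming-minimality force $\epsilon=0$, hence exact preservation of the row counts --- disposes of the crisp part. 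Your route is more self-contained (it does not need the expressibility machinery for sufficiency) and strictly more general in the infinite case than the paper's single check of $\lambda_\infty$, at the cost of a longer case analysis over the thirteen sublattices. One small point worth a sentence in a final write-up: $\Gamma_{\sf sub,2}$ as defined in the paper contains unary and constant functions as well, for which the multimorphism inequality follows from conservativity alone.
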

\begin{proof}
First we consider unary cost functions.
All unary cost functions on a Boolean domain are easily shown to be submodular.
Also, any conservative function $\mathcal{F}:D^k\rightarrow D^k$ is clearly a multimorphism
of any unary cost function, since it merely permutes its arguments.

For any $d \in D$ and $c \in \mathbb{R}$, define the unary cost function $\mu_c^d$ as follows:
\begin{equation*}
\mu_c^d(x)\ =\ \begin{cases}
c & \mbox{if $x=d$}, \\
0 & \mbox{if $x\neq d$}.
\end{cases}
\end{equation*}

Let $\mathcal{F}:D^k\rightarrow D^k$ be a non-conservative function.
In that case, there are $u_1,\ldots,u_k,v_1,\ldots,v_k\in D$ such that
$\mathcal{F}(u_1,\ldots,u_k)=\tuple{v_1,\ldots,v_k}$ and there is $i$ such that
$v_i$ occurs more often in $\tuple{v_1,\ldots,v_k}$ than in
$\tuple{u_1,\ldots,u_k}$. It is simple to check that
$\mathcal{F}$ is not a multimorphism of the unary cost function $\mu_1^{v_i}$.
Hence any $\mathcal{F} \in \mul{\Gamma_{\sf sub, 2}}$ must be conservative.

By the same argument,
any $\mathcal{F} \in \mul{\Gamma_{\sf sub, 2}^{\infty}}$ must be conservative.

For any $c \in \RRo$, define the binary cost functions $\lambda_c$ and $\chi_c$ as follows:
\begin{equation*}
\lambda_c(x,y)\ =\ \begin{cases}
c & \mbox{if $x=0$ and $y=1$}, \\
0 & \mbox{otherwise}.
\end{cases}
\hspace{2cm}
\chi_{c}(x,y)\ =\ \begin{cases}
c  &  \mbox{if $x \neq y$,} \\
0  &  \mbox{otherwise.}
\end{cases}
\end{equation*}
Note that $\chi_c(x,y) = \lambda_c(x,y) + \lambda_c(y,x)$.

By a simple case analysis, it is straightforward to check that any binary submodular
cost function on a Boolean domain can be expressed by binary functions of the form  $\lambda_c$, 
with $c > 0$
together with unary cost functions of the form $\mu_c^d$.

We observe that when $c < \infty$, $\lambda_c(x,y) = (\chi_c(x,y)+\mu_c^0(x)+\mu_c^1(y) - c)/2$,
so $\lambda_c$ can be expressed by functions of the form $\chi_c$
together with unary cost functions of the form $\mu_c^d$.
Hence, since expressibility preserves multimorphisms,
$\mul{\Gamma_{\sf sub, 2}} = \mul{\{\chi_c \mid c \in \mathbb{R}, c>0\}}\cap\mul{\{\mu_c^d \mid c \in \mathbb{R}, d \in D\}}$.

Now let $\mathbf{u},\mathbf{v} \in D^k$,
and consider the multimorphism inequality, as given in Definition~\ref{def:multim},
for the case where $t_i = \tuple{\mathbf{u}[i],\mathbf{v}[i]}$, for $i=1,\ldots,k$.
By Definition~\ref{def:multim}, for any $c > 0$, $\mathcal{F}$ is a multimorphism of $\chi_c$
if and only if the following holds for all choices of $\mathbf{u}$ and $\mathbf{v}$:
\[
  H(\mathbf{u},\mathbf{v})\ge H(\mathcal{F}(\mathbf{u}), \mathcal{F}(\mathbf{v})).
\]
This proves that the multimorphisms of $\Gamma_{\sf sub, 2}$ are precisely
the conservative functions which are also Hamming distance non-increasing.

Since $\Gamma_{\sf sub,2}\subseteq\Gamma_{\sf sub,2}^{\infty}$, we know that
$\mul{\Gamma_{\sf sub,2}^{\infty}}\subseteq\mul{\Gamma_{\sf sub,2}}$. Therefore,
in order to complete the proof it is enough to show that every
conservative and Hamming distance non-increasing function $\mathcal{F}$ is a
multimorphism of $\lambda_{\infty}$.

For any $\mathbf{u},\mathbf{v}\in\{0,1\}^k$, the Hamming
distance $H(\mathbf{u},\mathbf{v})$ is equal to the symmetric difference
of the sets of positions where $\mathbf{u}$ and $\mathbf{v}$
take the value $1$. Hence, for tuples $\mathbf{u}$ and $\mathbf{v}$ containing some fixed number of 1s,
the minimum Hamming distance occurs precisely when
one of these sets of positions is contained in the other.

Now consider again the multimorphism inequality, as given in Definition~\ref{def:multim},
for the case where $t_i = \tuple{\mathbf{u}[i],\mathbf{v}[i]}$, for $i=1,\ldots,k$.
If there is any position $i$ where $\mathbf{u}[i]=0$ and $\mathbf{v}[i]=1$,
then $\lambda_{\infty}(t_i) = \infty$,
so the multimorphism inequality is trivially satisfied.
If there is no such position, then the set of positions where $\mathbf{v}$ takes the value 1
is contained in the set of positions where $\mathbf{u}$ takes the value 1, so $H(\mathbf{u},\mathbf{v})$
takes its minimum possible value over all reorderings of $\mathbf{u}$ and $\mathbf{v}$.
Hence if $\mathcal{F}$ is conservative,
then $H(\mathbf{u},\mathbf{v}) \leq H(\mathcal{F}(\mathbf{u}),\mathcal{F}(\mathbf{v}))$,
and if $\mathcal{F}$ is Hamming distance non-increasing, we have
$H(\mathbf{u},\mathbf{v})= H(\mathcal{F}(\mathbf{u}),\mathcal{F}(\mathbf{v}))$.
But this implies that the set of positions where $\mathcal{F}(\mathbf{v})$ takes the value 1
is contained in the set of positions where $\mathcal{F}(\mathbf{u})$ takes the value 1.
By definition of $\lambda_{\infty}$, this implies that both sides of the multimorphism inequality are zero,
so $\mathcal{F}$ is a multimorphism of $\lambda_{\infty}$.
\end{proof}

\subsection{Non-expressibility of $\Gamma_{\sf sub}$ over $\Gamma_{\sf sub,2}$}

Consider the (carefully chosen) function $\mathcal{F}_{sep}:\{0,1\}^5\rightarrow \{0,1\}^5$
defined in Figure~\ref{fig:sepdef}.
We will show in this section that this particular function can be used to characterise
all the submodular functions of arity 4 which are expressible by binary submodular
functions on a Boolean domain,
and hence show that some submodular functions are not expressible.
\begin{figure}[htb]
\[
\begin{array}{r}
\begin{array}{c|r}
&
0\ 0\ 0\ 0\ 0\ 0\ 0\ 0\ 0\ 0\ 0\ 0\ 0\ 0\ 0\ 0\ 1\ 1\ 1\ 1\ 1\ 1\ 1\ 1\ 1\ 1\ 1\
1\ 1\ 1\ 1\ 1\\
\phantom{\mathcal{F}_{sep}(\mathbf{x})} &
0\ 0\ 0\ 0\ 0\ 0\ 0\ 0\ 1\ 1\ 1\ 1\ 1\ 1\ 1\ 1\ 0\ 0\ 0\ 0\ 0\ 0\ 0\ 0\ 1\ 1\ 1\
1\ 1\ 1\ 1\ 1\\
\mathbf{x} &
0\ 0\ 0\ 0\ 1\ 1\ 1\ 1\ 0\ 0\ 0\ 0\ 1\ 1\ 1\ 1\ 0\ 0\ 0\ 0\ 1\ 1\ 1\ 1\ 0\ 0\ 0\
0\ 1\ 1\ 1\ 1 \\
&
0\ 0\ 1\ 1\ 0\ 0\ 1\ 1\ 0\ 0\ 1\ 1\ 0\ 0\ 1\ 1\ 0\ 0\ 1\ 1\ 0\ 0\ 1\ 1\ 0\ 0\ 1\
1\ 0\ 0\ 1\ 1 \\
&
0\ 1\ 0\ 1\ 0\ 1\ 0\ 1\ 0\ 1\ 0\ 1\ 0\ 1\ 0\ 1\ 0\ 1\ 0\ 1\ 0\ 1\ 0\ 1\ 0\ 1\ 0\
1\ 0\ 1\ 0\ 1 \\
\end{array}
\\
\hline
\begin{array}{c|r}
&
0\ 0\ 0\ 0\ 0\ 0\ 0\ 0\ 0\ 0\ 0\ 0\ 0\ 0\ 0\ 0\ 0\ 0\ 0\ 1\ 0\ 0\ 0\ 1\ 0\ 0\ 0\
1\ 0\ 0\ 0\ 1 \\
&
0\ 0\ 0\ 0\ 0\ 0\ 0\ 0\ 0\ 0\ 0\ 0\ 0\ 1\ 0\ 1\ 0\ 0\ 0\ 0\ 0\ 0\ 0\ 0\ 0\ 0\ 0\
0\ 0\ 1\ 1\ 1 \\
\mathcal{F}_{sep}(\mathbf{x}) &
0\ 0\ 0\ 0\ 0\ 0\ 1\ 1\ 0\ 0\ 0\ 1\ 0\ 0\ 1\ 1\ 0\ 0\ 0\ 0\ 0\ 1\ 1\ 1\ 1\ 1\ 1\
1\ 1\ 1\ 1\ 1 \\
&
0\ 0\ 0\ 1\ 0\ 1\ 0\ 1\ 1\ 1\ 1\ 1\ 1\ 1\ 1\ 1\ 1\ 1\ 1\ 1\ 1\ 1\ 1\ 1\ 1\ 1\ 1\
1\ 1\ 1\ 1\ 1 \\
&
0\ 1\ 1\ 1\ 1\ 1\ 1\ 1\ 0\ 1\ 1\ 1\ 1\ 1\ 1\ 1\ 0\ 1\ 1\ 1\ 1\ 1\ 1\ 1\ 0\ 1\ 1\
1\ 1\ 1\ 1\ 1 \\
\end{array}
\end{array}
\]
\caption{Definition of $\mathcal{F}_{sep}$.} \label{fig:sepdef}
\end{figure}
\begin{proposition} \label{prop:sepmul0}
$\mathcal{F}_{sep}$ is conservative and Hamming distance non-increasing.
\end{proposition}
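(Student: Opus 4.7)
The proposition asserts two finite combinatorial properties of the explicitly tabulated function $\mathcal{F}_{sep}\colon\{0,1\}^5\to\{0,1\}^5$, so the plan is fundamentally a direct verification from Figure~\ref{fig:sepdef}. Before starting, I would record one simplifying observation: on a Boolean domain the multiset of entries of a $5$-tuple is determined by its Hamming weight, so conservativity amounts to the statement that $\mathcal{F}_{sep}(\mathbf{x})$ and $\mathbf{x}$ have the same number of $1$s for every $\mathbf{x}\in\{0,1\}^5$. This reduces conservativity to a column-by-column check over the $32$ columns of the figure; I would organise it by grouping columns according to input weight and tallying the $1$s above and below the horizontal line.

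For the Hamming non-increasing property, one must establish that $H(\mathcal{F}_{sep}(\mathbf{u}),\mathcal{F}_{sep}(\mathbf{v}))\le H(\mathbf{u},\mathbf{v})$ for each of the $\binom{32}{2}=496$ unordered pairs of distinct inputs. Two easy reductions substantially shrink this: any pair with $H(\mathbf{u},\mathbf{v})\ge 5$ is automatic since the output distance cannot exceed~$5$; and $H(\mathbf{u},\mathbf{v})$ always shares parity with $w(\mathbf{u})-w(\mathbf{v})$ (from $H=a+b$ and $w(\mathbf{u})-w(\mathbf{v})=a-b$, where $a,b$ count the two types of disagreeing positions), so by conservativity the input and output Hamming distances must lie in the same parity class. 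After these reductions, I would group the remaining pairs by the ordered weight pair $(w(\mathbf{u}),w(\mathbf{v}))$; for each case it suffices to look up the two relevant columns of the table, count disagreements in the output block, and compare with the disagreement count in the input block.

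The main obstacle is therefore not conceptual but the scale of the remaining case analysis. Particular care is needed for the \emph{nested} pairs (those satisfying $H(\mathbf{u},\mathbf{v})=|w(\mathbf{u})-w(\mathbf{v})|$, i.e.\ one is coordinatewise below the other), because conservativity alone forces $H(\mathcal{F}_{sep}(\mathbf{u}),\mathcal{F}_{sep}(\mathbf{v}))\ge|w(\mathbf{u})-w(\mathbf{v})|$, so these cases must be tight with no slack and admit no error. The function $\mathcal{F}_{sep}$ has evidently been hand-crafted so that every case works out; I would not expect any further structural shortcut beyond the reductions above, and in practice the cleanest way to complete the verification is to carry out the remaining bounded table lookup—most reliably with a small computer check backing up the hand inspection.
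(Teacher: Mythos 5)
Your proposal is correct and takes essentially the same route as the paper, whose proof is just a direct exhaustive verification of the tabulated function $\mathcal{F}_{sep}$. Your reformulation of conservativity as weight preservation, and the parity and distance-$5$ reductions for the Hamming check, are all sound and simply organise that same finite verification more efficiently.
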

\begin{proof}
Straightforward exhaustive verification.
\end{proof}

\begin{theorem}\label{thm:sepmul}
For any function $f \in \Gamma_{\sf sub,4}$ the following are equivalent:
\begin{enumerate}
\item $f \in \express{\Gamma_{\sf sub,2}}$;
\item $\mathcal{F}_{sep} \in \mul{\{f\}}$;
\item $f \in \cone{\Gamma_{\sf fans,4}}$.
\end{enumerate}
\end{theorem}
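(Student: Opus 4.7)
\medskip

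\noindent\textbf{Proof plan.}
The plan is to establish the cyclic chain $(3) \Rightarrow (1) \Rightarrow (2) \Rightarrow (3)$.

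First, $(3) \Rightarrow (1)$ is an immediate consequence of Theorem~\ref{thm:upper}. Every element of $\Gamma_{\sf fans,4}$ is expressible by $\Gamma_{\sf sub,2}$, and $\express{\Gamma_{\sf sub,2}}$ is closed under non-negative scaling and under finite sums (by allocating disjoint sets of hidden variables to the separate gadgets), so $\cone{\Gamma_{\sf fans,4}} \subseteq \express{\Gamma_{\sf sub,2}}$.

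Next, $(1) \Rightarrow (2)$ combines Proposition~\ref{prop:sepmul0} with Theorem~\ref{thm:hamming}: since $\mathcal{F}_{sep}$ is conservative and Hamming distance non-increasing, we have $\mathcal{F}_{sep} \in \mul{\Gamma_{\sf sub,2}}$. Because multimorphisms are preserved under expressibility, any $f \in \express{\Gamma_{\sf sub,2}}$ has $\mathcal{F}_{sep}$ as a multimorphism.

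The main content of the theorem is $(2) \Rightarrow (3)$. The approach will be to view
\[
C \defequals \{ f \in \Gamma_{\sf sub,4} \mid \mathcal{F}_{sep} \in \mul{\{f\}} \}
\]
as a polyhedral cone in the $16$-dimensional space of real functions on $D^4$, and to show $C = \cone{\Gamma_{\sf fans,4}}$ by explicit extreme ray enumeration. The defining halfspaces of $C$ are: (i) the submodularity inequalities, i.e.\ the non-positivity of every second-order derivative $\delta_{i,j}(\vec{x})$; and (ii) the multimorphism inequalities for $\mathcal{F}_{sep}$ from Definition~\ref{def:multim}. One then runs a double-description (or reverse-search) algorithm to enumerate the extreme rays of $C$, and checks that each is either an upper fan, a lower fan, a constant, or a modular function, all of which lie in $\cone{\Gamma_{\sf fans,4}}$ (constant and modular contributions being covered by trivial fans under the convention in Definition~\ref{def:upperfan}). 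Since every element of a polyhedral cone is a non-negative combination of its extreme rays, $f \in \cone{\Gamma_{\sf fans,4}}$ follows for every $f \in C$.

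The main obstacle is this last extreme ray enumeration: although the linear constraints defining $C$ are finite and explicit, verifying that none of the extreme rays falls outside $\cone{\Gamma_{\sf fans,4}}$ is a substantial combinatorial check that is most naturally performed by computer. As a byproduct, the enumeration yields a concrete generating list for $\cone{\Gamma_{\sf fans,4}}$, which the rest of the paper uses both to refute the Promislow--Young conjecture and to exhibit explicit $4$-ary submodular functions that are not expressible by binary submodular functions.
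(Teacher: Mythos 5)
Your proposal is correct and follows essentially the same route as the paper: the easy implications via Theorem~\ref{thm:upper}, Proposition~\ref{prop:sepmul0} and Theorem~\ref{thm:hamming} together with preservation of multimorphisms under expressibility, and the hard implication by a double-description enumeration of the extreme rays of the cone cut out by the $\mathcal{F}_{sep}$ multimorphism inequalities, checking that each ray lies in $\cone{\Gamma_{\sf fans,4}}$. The paper performs exactly this computation (reducing to 30 inequalities and 31 extreme rays, all fans or sums of fans), so there is nothing substantive to add.
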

\begin{proof}
Proposition~\ref{prop:sepmul0} and Theorem~\ref{thm:hamming} imply that
$\mathcal{F}_{sep}$ is a multimorphism of any binary submodular function on a Boolean domain.
Hence having $\mathcal{F}_{sep}$ as a multimorphism is
a necessary condition for any submodular cost function on a Boolean domain
to be expressible by binary submodular cost functions.

We will now complete the proof by showing that for 4-ary submodular cost
functions on a Boolean domain having $\mathcal{F}_{sep}$ as a multimorphism is
also {\em sufficient\/} to ensure expressibility by binary cost functions.

We consider the complete set of inequalities on the values of a 4-ary cost
function resulting from having the multimorphism $\mathcal{F}_{sep}$, as
specified in Definition~\ref{def:multim}. Out of $16^5$ such inequalities,
there are 4635 which are distinct.
After removing from these all those which are equal to the sum of two others, we obtain a
system of just 30 inequalities which must be satisfied by any 4-ary submodular
cost function which has the multimorphism $\mathcal{F}_{sep}$. Using the double
description method\footnote{As implemented, for example, by the program {\sc
Skeleton} available from {\tt
http://www.uic.nnov.ru/\~{}zny/skeleton/}}~\cite{Motzkin53:ddm} we obtain from
these 30 inequalities an equivalent set of 31 extreme rays which generate the
same polyhedral cone of cost functions. These extreme rays all correspond to fans or sums of fans,
and hence are expressible over $\Gamma_{\sf sub,2}$, by Theorem~\ref{thm:upper}.
It follows that any cost function in this cone of functions is also expressible
over $\Gamma_{\sf sub,2}$.
\end{proof}

Next we show that there are indeed 4-ary submodular cost functions which do not
have $\mathcal{F}_{sep}$ as a multimorphism and therefore are not expressible by
binary submodular cost functions.

\begin{Definition}
For any Boolean tuple $t$ of arity 4 containing exactly 2 ones and two zeros, we define the
4-ary cost function $\theta_t$ as follows:
\begin{equation*}
\theta_{t}(x_1,x_2,x_3,x_4)\ =\ \begin{cases}
-1  &  \mbox{if $(x_1,x_2,x_3,x_4) = (1,1,1,1)$ or $(0,0,0,0)$}, \\
\phantom{-}1   &  \mbox{if $(x_1,x_2,x_3,x_4) = t$}, \\
\phantom{-}0   &  \mbox{otherwise.}
\end{cases}
\end{equation*}
\end{Definition}

Cost functions of the form $\theta_t$ were introduced in~\cite{Promislow05:supermodular},
where they are called {\em quasi-indecomposable\/} functions. We denote by
$\Gamma_{\sf qin}$ the set of all (six) quasi-indecomposable cost functions of
arity 4.
It is straightforward to check that they are submodular, but the next result shows that they
are {\em not\/} expressible by binary submodular functions.

\begin{proposition} \label{prop:sepmul}
For all $\theta\in\Gamma_{\sf qin}$, $\mathcal{F}_{sep}\not\in\mul{\{\theta\}}$.
\end{proposition}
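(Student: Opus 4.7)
My plan is to bypass the combinatorial task of finding explicit tuple-level witnesses by invoking Theorem~\ref{thm:sepmul}. Each $\theta\in\Gamma_{\sf qin}$ is a 4-ary submodular cost function on a Boolean domain, as noted in the paragraph preceding the proposition, so Theorem~\ref{thm:sepmul} applies and yields the equivalence $\mathcal{F}_{sep}\in\mul{\{\theta\}}\Leftrightarrow\theta\in\cone{\Gamma_{\sf fans,4}}$. It therefore suffices to verify that $\theta\not\in\cone{\Gamma_{\sf fans,4}}$ for every $\theta\in\Gamma_{\sf qin}$.

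The verification reduces to a short sign argument. By Definition~\ref{def:upperfan}, every upper fan and every lower fan takes values only in $\{-2,-1,0\}$ and is hence a non-positive function. A non-negative linear combination of non-positive functions is non-positive at every point, so every $\phi\in\cone{\Gamma_{\sf fans,4}}$ satisfies $\phi(x)\leq 0$ for all $x$. However, by the definition of $\theta_t$ we have $\theta_t(t)=+1>0$, so $\theta_t$ cannot be written as such a non-negative combination; hence $\theta_t\not\in\cone{\Gamma_{\sf fans,4}}$. The contrapositive of the implication ``$\mathcal{F}_{sep}\in\mul{\{\theta_t\}}\Rightarrow\theta_t\in\cone{\Gamma_{\sf fans,4}}$'' furnished by Theorem~\ref{thm:sepmul} then gives $\mathcal{F}_{sep}\not\in\mul{\{\theta_t\}}$, uniformly over the six quasi-indecomposable functions.

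The step I expect to be the main obstacle in a more direct approach---namely, producing for some representative $t$ (say $t=(1,1,0,0)$) an explicit quintuple $(t_1,\ldots,t_5)\in(\{0,1\}^4)^5$ strictly violating $\sum_i\theta_t(t_i)\geq\sum_i\theta_t(f_i(t_1,\ldots,t_5))$---is sidestepped entirely. If one pursued that route instead, one would invoke the $S_4$-symmetry of $\Gamma_{\sf qin}$ (together with the fact that multimorphisms are preserved under coordinate permutations) to reduce to a single $t$, and then locate a witness by a finite search over at most $16^5$ configurations, most cleanly performed by computer. The route via Theorem~\ref{thm:sepmul} avoids any such enumeration and makes the non-membership conceptually transparent.
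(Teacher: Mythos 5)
Your reduction to Theorem~\ref{thm:sepmul} is not circular (its proof does not rely on Proposition~\ref{prop:sepmul}), but the sign argument you use to conclude $\theta_t\not\in\cone{\Gamma_{\sf fans,4}}$ proves too much, and this exposes a genuine gap. Consider the $4$-ary function $f(x_1,x_2,x_3,x_4)=x_1$, or the constant function $1$. Both lie in $\Gamma_{\sf sub,4}$ and both take a strictly positive value somewhere, so your argument places them outside $\cone{\Gamma_{\sf fans,4}}$; the contrapositive of Theorem~\ref{thm:sepmul} would then yield $\mathcal{F}_{sep}\not\in\mul{\{f\}}$. That conclusion is false: $\mathcal{F}_{sep}$ is conservative, hence a multimorphism of every unary (indeed every modular) cost function, since conservativity forces column sums to be preserved. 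What this shows is that the cone membership in Theorem~\ref{thm:sepmul} can only be read modulo additive constants and modular functions --- the paper implicitly works in that quotient throughout, e.g.\ when it writes $x_1$ as a positive multiple of a lower fan \emph{plus a constant}. Under that reading your key step collapses: $\theta_t-1$ takes values in $\{-2,-1,0\}$, so pointwise non-positivity no longer separates $\theta_t$ from the fan cone. Showing that $\theta_t$ is not a non-negative combination of fans \emph{even after subtracting a modular function} is precisely the non-trivial extremality content of Promislow and Young's structure theorem, not a one-line observation.

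The paper's own proof avoids all of this machinery: it exhibits an explicit quintuple of arguments (Figure~\ref{fig:sepmultable}) on which $\theta_{(1,1,0,0)}$ violates the multimorphism inequality for $\mathcal{F}_{sep}$ (the five input tuples evaluate to total cost $0$, their images under $\mathcal{F}_{sep}$ to total cost $1$), and then permutes coordinates to cover the remaining five members of $\Gamma_{\sf qin}$ --- exactly the direct witness you describe as the ``obstacle'' and try to sidestep. To repair your approach you would need either such a witness, or a genuine proof that $\theta_t$ lies outside $\cone{\Gamma_{\sf fans,4}}$ modulo modular functions.
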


\begin{proof}
The table in Figure~\ref{fig:sepmultable} shows that $\mathcal{F}_{sep}\not\in\mul{\{\theta_{(1,1,0,0)}\}}$.
Permuting the columns appropriately establishes the result for all other $\theta\in\Gamma_{\sf qin}$.
\begin{figure}[htb]
  \[
 \begin{array}{c}
  \begin{array}{c}
  ~\\
  ~\\
  ~\\
  ~\\
  ~\\
  \end{array}
  \\
  \begin{array}{c}
  ~\\
  ~\\
  \mathcal{F}_{sep}\\
  ~\\
  ~\\
  \end{array}
  \end{array}
  \begin{array}{c}
  \begin{array}{cccc}
    1 & 0 & 1 & 0 \\
    1 & 0 & 0 & 1 \\
    0 & 1 & 0 & 1 \\
    0 & 1 & 1 & 0 \\
    0 & 0 & 1 & 1 \\
  \end{array}
  \\
  \hline
  \begin{array}{cccc}
    0 & 0 & 1 & 0 \\
    0 & 0 & 0 & 1 \\
    1 & 1 & 0 & 0 \\
    1 & 0 & 1 & 1 \\
    0 & 1 & 1 & 1 \\
  \end{array}
  \\
  \end{array}
  \begin{array}{c}
  \stackrel{\theta_{(1,1,0,0)}}{\longrightarrow}
  \left.
  \begin{array}{c}
    0\\
    0\\
    0\\
    0\\
    0\\
  \end{array}
  \right\}\mbox{\normalsize{$\sum$ = 0}}
  \\
  \stackrel{\theta_{(1,1,0,0)}}{\longrightarrow}
  \left.
  \begin{array}{c}
    0\\
    0\\
    1\\
    0\\
    0\\
  \end{array}
  \right\}\mbox{\normalsize{$\sum$ = 1}}
  \\
  \end{array}
  \]
  \caption{$\mathcal{F}_{sep}\not\in\mul{\{\theta_{(1,1,0,0)}\}}.$}
  \label{fig:sepmultable}
\end{figure}
\end{proof}

\begin{corollary}\label{cor:notexpr}
For all $\theta\in\Gamma_{\sf qin}$, $\theta \not\in \express{\Gamma_{\sf sub,2}}$.
\end{corollary}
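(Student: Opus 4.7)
The plan is to derive the corollary directly from the preceding material by a contrapositive argument based on the preservation of multimorphisms under expressibility. This principle was recorded in the preliminaries: if $\mathcal{F}\in\mul{\Gamma}$ and $\phi\in\express{\Gamma}$, then $\mathcal{F}\in\mul{\{\phi\}}$. So it suffices to exhibit some multimorphism of $\Gamma_{\sf sub,2}$ which fails to be a multimorphism of $\theta$.

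First I would observe that the candidate function is already on the table: $\mathcal{F}_{sep}$, as defined in Figure~\ref{fig:sepdef}. By Proposition~\ref{prop:sepmul0}, $\mathcal{F}_{sep}$ is conservative and Hamming distance non-increasing. The characterisation in Theorem~\ref{thm:hamming} (specifically, the implication $(3)\Rightarrow(1)$) then yields $\mathcal{F}_{sep}\in\mul{\Gamma_{\sf sub,2}}$. This is the one nontrivial ingredient already proved earlier in the paper, and the whole point of having isolated Theorem~\ref{thm:hamming} is to make this step immediate.

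Next I would apply Proposition~\ref{prop:sepmul}, which asserts that for every $\theta\in\Gamma_{\sf qin}$ we have $\mathcal{F}_{sep}\not\in\mul{\{\theta\}}$. Now suppose, for contradiction, that some $\theta\in\Gamma_{\sf qin}$ satisfies $\theta\in\express{\Gamma_{\sf sub,2}}$. Since multimorphisms are preserved under expressibility, the fact that $\mathcal{F}_{sep}\in\mul{\Gamma_{\sf sub,2}}$ would force $\mathcal{F}_{sep}\in\mul{\{\theta\}}$, directly contradicting Proposition~\ref{prop:sepmul}. Hence no such $\theta$ is expressible, proving the corollary.

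There is essentially no obstacle here: the corollary is a one-line deduction once Proposition~\ref{prop:sepmul0}, Theorem~\ref{thm:hamming}, and Proposition~\ref{prop:sepmul} are in hand. The only thing to be careful about is invoking the correct direction of the preservation principle (the contrapositive: a multimorphism of $\Gamma$ that fails on $\phi$ witnesses non-expressibility of $\phi$ over $\Gamma$), and noting that the argument applies uniformly to all six members of $\Gamma_{\sf qin}$ because Proposition~\ref{prop:sepmul} already covers them all.
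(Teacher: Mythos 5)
Your proof is correct and follows essentially the same route as the paper: the paper's one-line proof cites Theorem~\ref{thm:sepmul} together with Proposition~\ref{prop:sepmul}, and the direction of Theorem~\ref{thm:sepmul} it actually uses is exactly the argument you spell out (Proposition~\ref{prop:sepmul0} plus Theorem~\ref{thm:hamming} give $\mathcal{F}_{sep}\in\mul{\Gamma_{\sf sub,2}}$, and preservation of multimorphisms under expressibility does the rest). You have merely inlined the necessity half of that theorem rather than citing it, which is a presentational difference only.
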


\begin{proof}
By Theorem~\ref{thm:sepmul} and Proposition~\ref{prop:sepmul}.
\end{proof}

Are there any other 4-ary submodular cost functions which are not expressible
over $\Gamma_{\sf sub,2}$? Promislow and Young characterised the extreme rays of
the cone of all 4-ary submodular\footnote{In fact,
\cite{Promislow05:supermodular} studied {\em supermodular\/} cost functions, but
as $f$ is supermodular if and only if $-f$ is submodular, the results translate
easily.} cost functions and established that $\Gamma_{\sf
sub,4}=\cone{\Gamma_{\sf fans,4}\cup\Gamma_{\sf qin}}$ -- see
Theorem~5.2~of~\cite{Promislow05:supermodular}. Hence the results in this
section characterise the expressibility of all 4-ary submodular functions.

Promislow and Young conjectured that for $k \neq 4$, all extreme rays of $\Gamma_{\sf sub,k}$ are
fans~\cite{Promislow05:supermodular}. However, if this conjecture were true it would imply
that all submodular functions of arity 5 and above were expressible by binary submodular functions,
by Theorem~\ref{thm:upper}. This is clearly not the case, because inexpressible cost functions
such as those identified in Corollary~\ref{cor:notexpr} can be extended to larger arities
(e.g., by adding dummy arguments) and remain inexpressible. Hence our results refute this conjecture.
However, we suggest that this conjecture can be refined to a similar statement
concerning just those submodular functions which are expressible by binary submodular functions,
as follows:
\begin{conjecture}
For all $k$, $\Gamma_{\sf sub,k} \cap \express{\Gamma_{\sf sub,2}}  = \cone{\Gamma_{\sf fans,k}}$.
\end{conjecture}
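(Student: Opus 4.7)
The inclusion $\cone{\Gamma_{\sf fans,k}} \subseteq \Gamma_{\sf sub,k} \cap \express{\Gamma_{\sf sub,2}}$ is immediate: every fan is submodular, every fan of arity at most $k$ belongs to $\express{\Gamma_{\sf sub,2}}$ by Theorem~\ref{thm:upper}, and both submodularity and expressibility by binary submodular functions are preserved under non-negative linear combinations. The substantive direction is the reverse inclusion, and my plan is to lift the strategy of Theorem~\ref{thm:sepmul} from $k=4$ to arbitrary $k$.

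By Theorem~\ref{thm:hamming}, every conservative Hamming-distance non-increasing function is a multimorphism of every binary submodular cost function, and since multimorphisms are preserved under expressibility, any $f \in \Gamma_{\sf sub,k}\cap \express{\Gamma_{\sf sub,2}}$ must have every such multimorphism, in addition to being submodular. These inequalities cut out a polyhedral cone $C_k$ in the $2^k$-dimensional space of $k$-ary cost functions, and $\Gamma_{\sf sub,k}\cap\express{\Gamma_{\sf sub,2}} \subseteq C_k$. It therefore suffices to prove that $C_k = \cone{\Gamma_{\sf fans,k}}$, which reduces to showing that every extreme ray of $C_k$ is a positive multiple of a fan of arity at most $k$.

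For $k \le 4$ this is already settled: every binary submodular cost function is in $\cone{\Gamma_{\sf fans,2}}$; every cubic submodular polynomial is a non-negative sum of upper fans by Theorem~5.2 of~\cite{Promislow05:supermodular}; and the case $k=4$ is exactly Theorem~\ref{thm:sepmul}. For $k \ge 5$ I would proceed by induction: the two restrictions of an extreme ray of $C_k$ obtained by freezing one argument to $0$ or to $1$ still lie in $C_{k-1}$, and hence decompose into fans of arity at most $k-1$ by the induction hypothesis. The ambient multimorphism and submodularity inequalities would then be used to glue these two restrictions back into a global decomposition of the original $k$-ary extreme ray as a non-negative sum of fans of arity at most $k$.

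The main obstacle is that the direct vertex-enumeration approach used in Theorem~\ref{thm:sepmul} is already infeasible at $k=5$: the number of conservative Hamming-distance non-increasing multimorphisms, and hence the number of defining inequalities of $C_k$, grows very quickly, and no single ``separator'' analogous to $\mathcal{F}_{sep}$ is likely to cut out $\cone{\Gamma_{\sf fans,k}}$ on its own. The genuinely hard step will be the gluing in the induction: one must show that the two inductive decompositions of $f|_{x_i=0}$ and $f|_{x_i=1}$, which a priori are far from unique, can be chosen compatibly so that their ``difference'' reassembles into fans actually involving $x_i$. If multimorphism preservation should turn out to be insufficient for this---multimorphisms form a sub-family of fractional polymorphisms and are not in general known to capture expressibility---one would have to replace $C_k$ by the full fractional polymorphism cone of $\Gamma_{\sf sub,2}$ from~\cite{Cohen06:expressive} and run the same extreme-ray argument there.
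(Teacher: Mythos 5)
The statement you are addressing is presented in the paper as a \emph{conjecture}: the paper itself establishes it only for $k\le 4$ (the cases $k\le 3$ via~\cite{Promislow05:supermodular} and $k=4$ via Theorem~\ref{thm:sepmul}) and offers no proof for $k\ge 5$. Your proposal correctly handles the easy inclusion $\cone{\Gamma_{\sf fans,k}}\subseteq\Gamma_{\sf sub,k}\cap\express{\Gamma_{\sf sub,2}}$ (fans are submodular, Theorem~\ref{thm:upper} gives expressibility, and both properties are closed under non-negative combinations), and it correctly reduces the hard inclusion to showing that every extreme ray of a suitable polyhedral cone is a positive multiple of a fan. But from $k=5$ onwards nothing is actually proved: the induction you outline hinges on a ``gluing'' step --- recombining fan decompositions of $f|_{x_i=0}$ and $f|_{x_i=1}$ into a fan decomposition of $f$ --- which you yourself flag as the genuinely hard step and do not carry out. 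That step is exactly where the open content of the conjecture lives, so what you have is a research plan rather than a proof.

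Two further cautions about the plan itself. First, your target identity $C_k=\cone{\Gamma_{\sf fans,k}}$, where $C_k$ is the cone cut out by submodularity together with all conservative Hamming-distance non-increasing multimorphisms, is \emph{stronger} than the conjecture: one only knows $\cone{\Gamma_{\sf fans,k}}\subseteq\Gamma_{\sf sub,k}\cap\express{\Gamma_{\sf sub,2}}\subseteq C_k$, so the conjecture could in principle hold while $C_k$ is strictly larger, since multimorphisms are only a necessary condition for expressibility (a point you acknowledge but do not resolve). Second, the restriction step in your induction is sound --- conservative operations fix constant tuples, so freezing a coordinate preserves membership in $C_{k-1}$ --- but extreme rays of $C_k$ need not restrict to extreme rays of $C_{k-1}$, the two inductive decompositions are highly non-unique, and no mechanism is given for choosing them compatibly so that their difference reassembles into fans involving the frozen variable. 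As it stands, your argument establishes the conjecture only for $k\le 4$, which is precisely what the paper already proves.
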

This conjecture was previously known to be true for $k \leq 3$~\cite{Promislow05:supermodular};
Theorem~\ref{thm:sepmul} confirms that it holds for $k=4$. 

Next we show that we can test efficiently whether a submodular polynomial of
degree 4 is expressible by quadratic submodular polynomials.

\begin{Definition} \label{def:condC}

Let $p(x_1,x_2,x_3,x_4) $ be the polynomial representation of a 4-ary submodular
cost function $f$. We denote by $a_I$ the coefficient of the term $\prod_{i\in I} x_i$.
We say that $f$ satisfies condition Sep if for each $\{i,j\},\{k,l\}\subset \{1,2,3,4\}$,
with $i,j,k,l$ distinct, we have
$a_{\{i,j\}}+a_{\{k,l\}}+a_{\{i,j,k\}}+a_{\{i,j,l\}} \leq 0$.

\end{Definition}

\begin{theorem}
For any $f \in \Gamma_{\sf sub,4}$, the following are equivalent:
\begin{enumerate}
\item
$ f\in \express{\Gamma_{\sf sub,2}}$
\item
$f$ satisfies condition Sep.
\end{enumerate}
\end{theorem}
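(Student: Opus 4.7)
The plan is to use Theorem~\ref{thm:sepmul} to reduce the claim to: for $f \in \Gamma_{\sf sub,4}$, condition Sep holds if and only if $f \in \cone{\Gamma_{\sf fans,4}}$. This fits squarely into the paper's framework of characterising expressibility via linear inequalities on polynomial coefficients.

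For $(1 \Rightarrow 2)$, Theorem~\ref{thm:sepmul} writes an expressible $f \in \Gamma_{\sf sub,4}$ as a non-negative combination $f = \sum_i \alpha_i \phi_i$ of 4-ary fans $\phi_i$. Since each coefficient $a_I$ depends linearly on $f$, the Sep inequalities are preserved under such combinations, so it suffices to verify Sep on every 4-ary fan. Using the explicit polynomial form of upper fans from Example~\ref{ex:upperfanpoly} together with duality to handle lower fans, I carry out a short case analysis on the shape of $F$ (essentially on $r$, $|\bigcup F|$, and the $|I_j|$), reading off the four relevant coefficients and checking each Sep inequality.

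For $(2 \Rightarrow 1)$, the cleanest route mirrors the method of Theorem~\ref{thm:sepmul}: consider the polyhedral cone cut out by the submodularity inequalities together with the six Sep inequalities, and apply the double description method to compute its extreme rays. Verifying that these are precisely the 4-ary fans identifies the cone with $\cone{\Gamma_{\sf fans,4}}$, whence Sep implies $f \in \cone{\Gamma_{\sf fans,4}}$. A more hand-computable variant starts from Promislow--Young's decomposition $\Gamma_{\sf sub,4} = \cone{\Gamma_{\sf fans,4} \cup \Gamma_{\sf qin}}$: writing $s_t(\cdot)$ for the left-hand side of the Sep inequality with $\{i,j\}$ and $\{k,l\}$ equal to the zero- and one-positions of $t$ respectively, a direct M\"obius-inversion computation on each $\theta_t$ yields $s_t(\theta_t) = +1$ (this is essentially Proposition~\ref{prop:sepmul}) together with $s_t(\theta_{t'}) = -1$ for $t' \neq t$, pinpointing precisely how each quasi-indecomposable direction obstructs Sep.

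The main obstacle lies in this second direction: the Promislow--Young decomposition of a given $f$ is not unique, so even knowing that each $\theta_t$ violates a specific Sep inequality does not by itself force the coefficient $\beta_t$ to be zero in an arbitrary decomposition (the resulting inequality $\beta_t \le \sum_{t' \neq t} \beta_{t'} + \sum_i \alpha_i \cdot (-s_t(\phi_i))$ is too weak). The double-description route sidesteps this by verifying the cone-level equality directly, and is consistent with the computational style used elsewhere in the paper.
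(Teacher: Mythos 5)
Your proposal is correct in outline, but it reaches the theorem by a genuinely different route from the paper. The paper works through the equivalence $(1)\Leftrightarrow(2)$ of Theorem~\ref{thm:sepmul} ($f\in\express{\Gamma_{\sf sub,2}}$ iff $\mathcal{F}_{sep}\in\mul{\{f\}}$) and then makes one key observation that your argument never uses: when the 30 irredundant value-inequalities expressing ``$\mathcal{F}_{sep}$ is a multimorphism'' are rewritten in terms of the polynomial coefficients $a_I$, exactly 24 of them are the submodularity inequalities and the remaining 6 are the Sep inequalities. For $f$ already assumed submodular, having the multimorphism is therefore \emph{literally} the statement that $f$ satisfies Sep, and no further computation is needed. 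You instead route through $(1)\Leftrightarrow(3)$ and prove the intermediate identity $\Gamma_{\sf sub,4}\cap\{\mbox{Sep}\}=\cone{\Gamma_{\sf fans,4}}$, which costs you two extra pieces of work: a case analysis verifying Sep on every $4$-ary upper and lower fan (plausible and checkable, e.g.\ the dual of $-2x_1x_2x_3x_4$ gives $-2-2+2+2=0\le 0$, but not actually exhibited), and a fresh double-description computation for the cone cut out by submodularity together with Sep. That second computation is sound and in the paper's computational style, but note that it is the \emph{same} computation already performed in Theorem~\ref{thm:sepmul}, since by the paper's observation the two inequality systems coincide; recognising this coincidence is precisely the shortcut the paper takes. (Minor point: the extreme rays found there are ``fans or sums of fans'' rather than precisely fans, which still suffices for the inclusion into $\cone{\Gamma_{\sf fans,4}}$.) Your self-diagnosis of the Promislow--Young variant is accurate: because decompositions over $\Gamma_{\sf fans,4}\cup\Gamma_{\sf qin}$ are not unique, the sign pattern $s_t(\theta_t)=+1$, $s_t(\theta_{t'})=-1$ does not by itself force the $\theta$-coefficients to vanish, so that variant does not close and must be replaced by the cone-level argument or the paper's shortcut.
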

\begin{proof}
As in the proof of Theorem~\ref{thm:sepmul}, we can construct a set of 30 inequalities corresponding
to the multimorphism $\mathcal{F}_{sep}$. Each of these inequalities on the values of a
cost function can be translated into inequalities on the coefficients of the corresponding polynomial
representation. 24 of them impose the condition of submodularity,
and the remaining 6 inequalities impose condition Sep.
Hence a submodular cost function of arity 4 has the multimorphism $\mathcal{F}_{sep}$
if and only if its polynomial representation satisfies condition Sep.
The result then follows from Theorem~\ref{thm:sepmul}.
\end{proof}

\begin{corollary}
Given a submodular polynomial $p$ of degree 4, condition Sep can be used to test in
polynomial time whether $p$ is expressible by quadratic submodular polynomials.
\end{corollary}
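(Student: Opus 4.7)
The plan is to observe that the preceding theorem reduces the expressibility question to a purely algebraic check on the coefficients of $p$, so the algorithm is simply to extract the coefficients and verify condition Sep. If $p$ is presented as a polynomial of degree at most $4$ in $n$ Boolean variables, then its input size is $O(n^4)$ (one coefficient $a_I$ per multi-index $I$ with $|I|\le 4$), and condition Sep naturally extends by requiring, for every $4$-subset $\{i,j,k,l\} \subseteq \{1,\ldots,n\}$ and every partition of it into two pairs $\{i,j\},\{k,l\}$, the inequality $a_{\{i,j\}}+a_{\{k,l\}}+a_{\{i,j,k\}}+a_{\{i,j,l\}}\le 0$. There are $\binom{n}{4}\cdot 3 = O(n^4)$ such inequalities, each checkable in $O(1)$ time; this is the stated polynomial-time bound.

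For correctness, I would argue both directions via the preceding theorem. For necessity, observe that if $p \in \express{\Gamma_{\sf sub,2}}$, then for any $4$-subset $J$ of variables the restriction $p|_J$ (obtained by fixing the remaining variables to $0$) is also in $\express{\Gamma_{\sf sub,2}}$: one simply substitutes the constants $0$ into any gadget for $p$. The resulting $4$-ary cost function is submodular (restrictions of submodular functions are submodular), so by the preceding theorem it satisfies condition Sep; but its coefficients are precisely $\{a_I : I\subseteq J\}$, giving the claimed inequalities.

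For sufficiency, I would check that condition Sep on every $4$-subset is precisely the statement that $\mathcal{F}_{sep}\in\mul{\{p\}}$: since $\mathcal{F}_{sep}$ acts coordinate-wise and every monomial of $p$ involves at most four variables, the multimorphism inequality for $p$ decomposes as a sum of contributions each of which depends on at most four coordinates, and the $4$-subset version of condition Sep captures all of them. Once $\mathcal{F}_{sep}\in\mul{\{p\}}$, one obtains an explicit gadget by writing $p$ as a positive combination of $4$-ary cost functions in $\cone{\Gamma_{\sf fans,4}}$ (using the extreme-ray decomposition from the proof of Theorem~\ref{thm:sepmul} on each $4$-variable window) and then invoking Theorem~\ref{thm:upper} to express each fan by binary submodular functions.

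The main obstacle is the sufficiency direction: making precise the step from ``condition Sep on every $4$-subset'' to an honest gadget for $p$ over all of its variables. The subtlety is that the decomposition into $4$-variable pieces is not unique and has to distribute the lower-order coefficients consistently, so care is needed to show that at least one valid distribution exists; this is ultimately a linear-algebraic claim about the $4$-variable facet description obtained in Theorem~\ref{thm:sepmul}, and once it is settled the corollary drops out.
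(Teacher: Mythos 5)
The statement is narrower than you have taken it to be. Condition Sep is defined in the paper only for a \emph{4-ary} submodular cost function $p(x_1,x_2,x_3,x_4)$, and the theorem immediately preceding the corollary establishes the equivalence between membership in $\express{\Gamma_{\sf sub,2}}$ and condition Sep only for $f\in\Gamma_{\sf sub,4}$. Under that reading the corollary is immediate and the paper offers no further argument: condition Sep is a fixed list of six linear inequalities on the coefficients of $p$, so the test runs in constant time, and its correctness is exactly the content of the preceding theorem. No restriction argument, no decomposition into windows, and no gadget construction is required.

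By reading ``polynomial of degree $4$'' as ``degree-$4$ polynomial in $n$ variables'' you have set yourself a substantially stronger claim, and your proof of it is not complete. The necessity half (restrict to a $4$-subset $J$ by fixing the remaining variables to $0$, note that the surviving coefficients are exactly $\{a_I : I\subseteq J\}$, and apply the $4$-ary theorem) is correct. The sufficiency half fails at precisely the point you flag, and for two reasons. First, even granting that condition Sep on every $4$-subset yields $\mathcal{F}_{sep}\in\mul{\{p\}}$, Theorem~\ref{thm:sepmul} converts that multimorphism into expressibility only for $4$-ary functions, via an explicit double-description computation of the extreme rays of the cone of $4$-ary cost functions; nothing in the paper lifts this implication to $n$-ary functions, and the paper explicitly leaves the analogue for arity $k\ge 5$ as an open conjecture. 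Second, your proposed route of writing $p$ as a positive sum of expressible $4$-variable pieces requires distributing each lower-order coefficient $a_I$ with $|I|\le 3$ among the many $4$-variable windows containing $I$ (of which there are $\Theta(n^{4-|I|})$) so that every window is simultaneously submodular and satisfies Sep; you give no argument that such a distribution exists, and that existence claim is the entire difficulty, not a routine linear-algebra check. So under your interpretation the proposal does not establish the corollary, while under the paper's interpretation everything after your first paragraph is unnecessary.
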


In contrast to this result, it is known that the recognition problem for submodular polynomials of degree 4 is
co-NP-complete~\cite{Gallo88:supermodular}. Given an arbitrary polynomial of
degree 4, condition Sep recognises expressible polynomials {\em under the assumption
that the polynomial is submodular}. One might hope that submodular polynomials which are
expressible by quadratic submodular polynomials would be recognisable in polynomial
time. Unfortunately, this is not the case. In fact, as all polynomials of degree
4 used in the reduction given in~\cite{Gallo88:supermodular} satisfy condition Sep, the
original reduction from~\cite{Gallo88:supermodular} proves the following:

\begin{proposition}

Given an arbitrary polynomial $p$ of degree 4, it is co-NP-complete to test whether $p$ is a
submodular polynomial which is expressible by quadratic submodular polynomials.

\end{proposition}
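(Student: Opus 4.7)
My plan is to prove the statement in two halves: membership in co-NP, and co-NP-hardness via a reuse of the Gallo--Simeone construction.

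For membership in co-NP, I would argue directly from the previous theorem. A polynomial $p$ of degree 4 is a submodular polynomial expressible by quadratic submodular polynomials if and only if (i) $p$ is submodular, and (ii) $p$ satisfies condition Sep. The negation is therefore ``$p$ is not submodular, or $p$ does not satisfy condition Sep''. Condition Sep is a finite set of linear inequalities on the coefficients of $p$ and can be checked directly in polynomial time. Non-submodularity has a short witness: by the second-order derivative criterion of Fisher cited earlier, it suffices to exhibit indices $i<j$ and a Boolean assignment to the remaining variables for which $\delta_{i,j}(\vec{x})>0$, which can be verified in polynomial time. So the complement of our problem lies in NP, and hence our problem lies in co-NP.

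For co-NP-hardness, I would invoke the reduction of Gallo and Simeone~\cite{Gallo88:supermodular}, which reduces a co-NP-hard problem to the question of whether a given degree-4 polynomial is submodular. The reduction assigns to each instance $I$ of the source problem an explicit polynomial $p_I$ of degree 4 such that $p_I$ is submodular if and only if $I$ is a yes-instance. The key structural step of the proof is to inspect the explicit form of $p_I$ produced in that reduction and verify that, for every choice of four distinct indices $i,j,k,l$, the relevant coefficients satisfy $a_{\{i,j\}}+a_{\{k,l\}}+a_{\{i,j,k\}}+a_{\{i,j,l\}}\le 0$. Hence every polynomial arising from the Gallo--Simeone reduction automatically satisfies condition Sep, independently of the instance.

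Combining the two ingredients, if $p_I$ is the polynomial produced by the reduction, then by the preceding theorem $p_I$ is a submodular polynomial expressible by quadratic submodular polynomials if and only if $p_I$ is submodular (since condition Sep always holds for these polynomials), which in turn holds if and only if $I$ is a yes-instance of the original co-NP-hard source problem. The same reduction therefore gives co-NP-hardness of our problem, and together with membership in co-NP establishes co-NP-completeness.

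The main obstacle is the structural verification that condition Sep is satisfied by every polynomial $p_I$ in the Gallo--Simeone reduction. This requires carefully re-examining the precise polynomial templates used in~\cite{Gallo88:supermodular}, reading off the coefficients of the quadratic and cubic monomials, and checking the six inequalities of Definition~\ref{def:condC} for each template; once this is done, the rest of the argument is immediate from Theorem~\ref{thm:sepmul} and the fact that condition Sep is a polynomial-time-checkable property of the coefficient vector.
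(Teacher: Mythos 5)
Your proposal is correct and follows essentially the same route as the paper: the paper's entire argument is the observation that every degree-4 polynomial produced by the Gallo--Simeone reduction~\cite{Gallo88:supermodular} satisfies condition Sep, so that for those polynomials ``submodular and expressible'' coincides with ``submodular'', whose recognition is co-NP-complete. Your explicit co-NP membership argument (a violated second-order derivative as a witness of non-submodularity, together with polynomial-time checkability of condition Sep) is a detail the paper leaves implicit, and the structural verification you flag as the remaining obstacle is exactly the fact the paper asserts without carrying out.
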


\subsection{Applications}
\label{sec:applications}

As mentioned above, testing submodularity is co-NP-complete even for polynomials
of degree~4~\cite{Gallo88:supermodular}.
However, for many of the optimisation problems arising in practice, testing for submodularity
is not an issue because the function to be minimised is presented as a sum of
functions of bounded arity. In such cases, each of the bounded-arity sub-functions can be tested for
submodularity in constant time.
For example, in constraint satisfaction problems and computer vision, each
instance is specified as a sum of bounded-arity functions and these can be independently
tested for submodularity.
The recognition of submodularity only becomes co-NP-complete when a function is presented
without a fixed decomposition into sub-functions of this kind.

\paragraph{Artificial Intelligence}

First we formally define valued constraint satisfaction
problems~\cite{Schiex95:valued,Bistarelli99:semiring+VCSP,Rossi06:handbook}.

\begin{Definition} \label{def:vcsp}

An instance $\mathcal{P}$ of $\vcspo$ is a triple $\tuple{V,D,\mathcal{C}}$,
where $V$ is a finite set of {\em variables}, which are to be assigned values
from the set $D$, and $\mathcal{C}$ is a set of {\em valued constraints}. Each
$c \in \mathcal{C}$ is a pair $c=\tuple{\sigma,\phi}$, where $\sigma$ is a tuple
of variables of length $|\sigma|$, called the {\em scope\/} of $c$, and
$\phi:D^{|\sigma|}\rightarrow\RRo$ is a cost function. An {\em assignment\/} for
the instance $\mathcal{P}$ is a mapping $s$ from $V$ to $D$. The {\em cost\/} of
an assignment $s$ is defined as follows: \[
Cost_{\mathcal{P}}(s)=\sum_{\tuple{\tuple{v_1,v_2,\ldots,v_m},\phi} \in\C}
\phi(\tuple{s(v_1),s(v_2),\ldots,s(v_m)}). \] A {\em solution\/} to
$\mathcal{P}$ is an assignment with minimum cost.

\end{Definition}

Now we show how our results can be applied in this framework.

\begin{corollary}[of Theorem~\ref{thm:upper}] \label{cor:upper}

$\vcsp{\Gamma_{\sf fans}}$ is solvable in $O((n+k)^3)$ time, where where $n$ is the
number of variables and $k$ is the number of higher-order (ternary and above)
constraints.

\end{corollary}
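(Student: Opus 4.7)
The plan is to reduce an arbitrary instance of $\vcsp{\Gamma_{\sf fans}}$ to an instance of $\vcsp{\Gamma_{\sf sub,2}}$ by substituting each higher-order fan constraint with the gadget provided by Theorem~\ref{thm:upper}, and then to appeal to the classical equivalence between the minimisation of sums of binary submodular Boolean cost functions and {\sc Min-Cut}.

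More concretely, given an input instance $\mathcal{P} = \tuple{V, D, \mathcal{C}}$ with $|V| = n$, I would leave every unary and binary constraint of $\mathcal{C}$ untouched and, for each of the $k$ higher-order constraints $\tuple{\sigma,\phi}$ (where $\phi$ is a fan of arity $m$), apply Theorem~\ref{thm:upper} to obtain a gadget $\phi' \in \cone{\Gamma_{\sf sub,2}}$ on $\sigma$ together with at most $1 + \lfloor m/2\rfloor$ fresh hidden variables $\vec{y}$, satisfying $\phi(\sigma) = \min_{\vec{y}} \phi'(\sigma, \vec{y}) + \kappa$ for a constant $\kappa$. Using disjoint sets of hidden variables for different constraints, I would then assemble the new instance $\mathcal{P}'$ consisting of all the binary and unary summands of these gadgets together with the original binary and unary constraints. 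Since each collection of hidden variables is private to its own gadget, the outer minimisation in $\mathcal{P}'$ realises all the inner expressibility minimisations simultaneously, so that $\mathcal{P}$ and $\mathcal{P}'$ have the same optimum cost up to an additive constant, and any optimal assignment of $\mathcal{P}'$ restricts to an optimal assignment of $\mathcal{P}$.

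It then remains to analyse the size of $\mathcal{P}'$ and to solve it. A direct accounting using the $1 + \lfloor m/2 \rfloor$ bound from Theorem~\ref{thm:upper} for each of the $k$ higher-order constraints shows that the total number of variables of $\mathcal{P}'$ is $N = O(n+k)$. Since $\mathcal{P}'$ now minimises a sum of unary and binary submodular pseudo-Boolean cost functions, it is equivalent to computing a minimum $s$--$t$ cut in a directed graph with $O(N)$ vertices, which can be solved in $O(N^3) = O((n+k)^3)$ time by any standard max-flow algorithm. The optimum of $\mathcal{P}$ is then recovered trivially from the resulting cut.

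The main point that needs care is the bookkeeping for the hidden variables and a verification that gadget substitution really does preserve optimum cost; both are routine consequences of the definition of expressibility and of the fact that valued constraints compose additively. Everything else is entirely standard, relying only on Theorem~\ref{thm:upper} and the classical cubic-time reduction of quadratic submodular Boolean minimisation to {\sc Min-Cut}.
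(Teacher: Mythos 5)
Your argument is correct and is exactly the intended proof: the paper states this corollary without a separate proof, relying precisely on the gadget substitution from Theorem~\ref{thm:upper} (with disjoint hidden variables for each higher-order constraint) followed by the classical cubic-time reduction of quadratic submodular pseudo-Boolean minimisation to \textsc{Min-Cut}. Note only that your count $N=O(n+k)$ tacitly assumes the constraint arities are bounded by a constant (each arity-$m$ gadget contributes up to $1+\lfloor m/2\rfloor$ extra variables), an assumption the paper itself makes when it remarks that the linear factor is proportional to the maximum arity of the constraints.
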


Moreover, as shown above,$\vcsp{\Gamma_{\sf fans,4}}$ is the {\em maximal\/} class in
$\vcsp{\Gamma_{\sf sub,4}}$ which can be solved by reduction to  {\sc Min-Cut}
in this way.

Cohen et al.~\cite{Cohen06:expressive} showed that if a cost function
$\phi$ of arity $k$ is expressible by some set of cost functions over $\Gamma$, then $\phi$ is expressible by $\Gamma$
using at most $2^{2^k}$ extra variables. Our results show that
only $O(k)$ extra variables are needed to express any cost function
from $\Gamma_{\sf fans,k}$ by $\Gamma_{\sf sub,2}$.
Therefore, an instance of $\vcsp{\Gamma_{\sf fans}}$ needs only linearly many
(in the number of higher-order constraints) extra variables, where the
linear factor is proportional to the maximum arity of the constraints.
In particular, an instance of $\vcsp{\Gamma_{\sf sub,4}}$ is either reducible to
{\sc Min-Cut} with only linearly many extra variables,\footnote{Optimal (in the
number of extra variables) gadgets for cost functions from $\Gamma_{\sf fans,4}$
were shown in~\cite{zj08:sub-tr}.} or is not reducible at all.

\paragraph{Computer Vision}

In computer vision, many problems can be naturally formulated in terms of energy
minimisation where the energy function, over a set of variables $\{x_v\}_{v\in V}$,
has the following form:
\[
E(\vec{x})\ =\ c_0+\sum_{v\in V}c_v(x_v)+\sum_{\tuple{u,v}\in V\times
V}c_{uv}(x_u,x_v)+\ldots\]
Set $V$ usually corresponds to pixels, $x_v$ denotes the label of of pixel $v\in
V$ which must belong to a finite domain $D$. The constant term of the energy is
$c_0$, the unary terms $c_v(\cdot)$ encode data penalty functions, the pairwise
terms $c_{uv}(\cdot,\cdot)$ are interaction potentials, and so on. Functions of
arity 3 and above are also called higher-order cliques. This energy is often
derived in the context of {\em Markov Random Fields}~\cite{Geman84,Besag86}: a
minimum of $E$ corresponds to a {\em maximum a-posteriori} (MAP) labelling
$\vec{x}$~\cite{Lauritzen96,Wainwright}.

It is straightforward that this is equivalent to $\vcspo$. See~\cite{Werner07}
for a survey on the connection between computer vision and constraint
satisfaction problems. Therefore, for energy minimisation over Boolean variables
we get the following:

\begin{corollary}[of Theorem~\ref{thm:upper}] \label{cor:upper2}

Energy minimisation, where each term of the energy function belongs to $\Gamma_{\sf fans}$, is
solvable in $O((n+k)^3)$ time, where where $n$ is the number of variables (pixels) and
$k$ is the number of higher-order (ternary and above) terms in the energy function.
\end{corollary}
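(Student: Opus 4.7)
The plan is to recognise that this corollary is essentially just a restatement of Corollary~\ref{cor:upper} in the vocabulary used in the computer vision literature, so the proof reduces to observing that the two problems are the same. First I would note that minimising the energy function $E(\mathbf{x})$ over Boolean assignments is syntactically an instance of $\vcspo$ with domain $D=\{0,1\}$: each term $c_v$, $c_{uv}$, etc., becomes a valued constraint whose cost function belongs to $\Gamma_{\sf fans}$ by hypothesis, while the constant $c_0$ only shifts the objective and may be discarded. The correspondence is spelled out in~\cite{Werner07} and was already used implicitly in deriving Corollary~\ref{cor:upper}.

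Next I would invoke Theorem~\ref{thm:upper} on each higher-order term: every fan of arity $m \ge 3$ can be replaced by a gadget over $\Gamma_{\sf sub,2}$ using at most $1+\lfloor m/2\rfloor$ extra Boolean variables. Unary and binary terms already lie in $\Gamma_{\sf sub,2}$ and require no extra variables. Summing over the $k$ higher-order terms (and treating the maximum arity as a constant, as is standard in this setting), the resulting pseudo-Boolean objective is a sum of quadratic submodular polynomials over $n+O(k)$ variables.

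Finally, I would apply the classical reduction of Hammer~\cite{Hammer65:bin,Boros:pseudo-boolean}, which turns the minimisation of a sum of quadratic submodular pseudo-Boolean polynomials over $N$ variables into a minimum $s$-$t$ cut problem on a directed graph with $N+2$ vertices, solvable in $O(N^3)$ time. Substituting $N = n + O(k)$ gives the claimed $O((n+k)^3)$ bound.

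There is no real obstacle here; the only thing to verify is the bookkeeping for the extra variables, and this has already been done in the discussion immediately after Corollary~\ref{cor:upper}, where it is observed that expressing any fan of arity $k$ by $\Gamma_{\sf sub,2}$ requires only $O(k)$ extra variables — linear, rather than the $2^{2^k}$ bound from the general expressibility result of~\cite{Cohen06:expressive}. This linear blow-up is precisely what is needed to keep the final graph size at $O(n+k)$.
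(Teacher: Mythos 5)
Your proposal matches the paper's intended argument: the paper obtains Corollary~\ref{cor:upper2} by identifying Boolean energy minimisation with $\vcspo$ (citing~\cite{Werner07}) and then applying Corollary~\ref{cor:upper}, which itself rests on Theorem~\ref{thm:upper} (each fan needs only linearly many extra variables) followed by the classical reduction of quadratic submodular polynomials to {\sc Min-Cut}. The bookkeeping you spell out is exactly what the paper leaves implicit, so your proof is correct and essentially identical in approach.
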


Note that any variable over a non-Boolean domain $D=\{0,1,\ldots,d-1\}$ of size
$d$ can be encoded by $d-1$ Boolean variables. One such encoding is the
following: $en(i)=0^{d-i-1}1^{i}$. We replace each variable with $d-1$ new
Boolean variables and impose a (submodular) relation on these new variables
which ensures that they only take values in the range of the encoding function
$en$. Note that $en(\max(a,b))=\max(en(a),en(b))$ and
$en(\min(a,b))=\min(en(a),en(b))$, so this encoding preserves submodularity.
Observe that any submodularity-preserving encoding of a non-Boolean variable by
Boolean variables needs at least $O(d)$ variables.
However, for practical purposes, subclasses of non-Boolean submodular functions
which can be encoded by Boolean submodular functions with fewer variables have
been studied, as well as approximation algorithms for these
problems~\cite{Ramalingam08:exact,Hokli08:graph}.

\section*{Acknowledgements}

The authors would like to thank Martin Cooper for fruitful discussions on
submodular functions and in particular for help with the proof of
Theorem~\ref{thm:upper}.
Stanislav \v{Z}ivn\'{y} would like to thank Philip Torr and his computer vision
group, and Tom\'{a}\v{s} Werner for clarifying the connection between constraint
satisfaction problems and computer vision.
Stanislav \v{Z}ivn\'y gratefully acknowledges the support of EPSRC grant
EP/F01161X/1.

\newcommand{\noopsort}[1]{}

\end{document}